\newcommand{\IR}{\mathbb{R}}
\newcommand{\IN}{\mathbb{N}}
\newcommand{\IC}{\mathbb{C}}
\newcommand{\Pseudo}{\textrm{Pdim}}
\newcommand{\fat}{\textrm{fat}}
\newtheorem{theorem}{Theorem}[section]
\newtheorem*{theorem*}{Theorem}
\newtheorem{corollary}[theorem]{Corollary}
\newtheorem{lemma}[theorem]{Lemma}
\newtheorem{definition}[theorem]{Definition}
\theoremstyle{definition}
\newtheorem{remark}[theorem]{Remark}
\numberwithin{equation}{section}
\DeclareMathOperator{\tr}{tr}
\renewenvironment{proof}{{\bfseries Proof:}}{}
\title{Pseudo-dimension of quantum circuits}
\date{\today}
\author[Caro, Datta]{Matthias C. Caro$^{\dagger 1}$ \and Ishaun Datta$^{\ddagger 1,2}$}
\email{$\dagger$\href{mailto:caro@ma.tum.de}{caro@ma.tum.de}, $\ddagger$\href{mailto:idatta@stanford.edu}{idatta@stanford.edu}}
\address{$^1$Technical University of Munich, Germany, Department of Mathematics\\
$^2$Stanford University, USA, Institute for Computational and Mathematical Engineering}
\begin{document}

\maketitle

\begin{abstract}
We characterize the expressive power of quantum circuits with the pseudo-dimension, a measure of complexity for probabilistic concept classes. We prove pseudo-dimension bounds on the output probability distributions of quantum circuits; the upper bounds are polynomial in circuit depth and number of gates. Using these bounds, we exhibit a class of circuit output states out of which at least one has exponential gate complexity of state preparation, and moreover demonstrate that quantum circuits of known polynomial size and depth are PAC-learnable.
\end{abstract}
\makeatletter
    \providecommand\@dotsep{5}
\makeatother

\section{Introduction}
\label{intro}
An important line of research in classical learning theory is characterizing the expressive power of function classes using complexity measures. Such complexity bounds can in turn be used to bound the size of training data required for learning. Among the most prominent of these are the Vapnik-Chervonenkis (VC) dimension introduced by \citep{Vapnik.1971}. Other well-known measures are the pseudo-dimension due to \citep{Pollard.1984}, the fat-shattering dimension due to \citep{Alon.1997}, the Rademacher complexities \citep[see][]{Bartlett.2002}, and more generally covering numbers in metric spaces.
\par
The goal of characterizing an object’s expressive power also appears in different guises throughout quantum information. A well-known example is quantum state tomography. \citep{Aaronson.2007} related a variant of state tomography to a classical learning task whose fat-shattering dimension can be bounded using a particular function class related to the set of quantum states. Associated to this is a corresponding upper bound on sample complexity.\par
\citep{Aaronson.2007} observes that there is no analogous theorem for general quantum process tomography, but leaves as an open question whether there are restricted classes of operations that are information-efficiently learnable. We answer this question in the affirmative. In particular, we show that for quantum circuits with depth and size polynomial in the number of qubits, quantum process tomography is possible using only polynomially many examples. \par
Gate complexity of unitary implementation and state preparation are yet another example of how one may capture the richness of a function class that corresponds to a quantum computational process \citep[see e.g.][]{Aaronson.2016}. For unitary complexity, the challenge is to determine, e.g., how many two-qubit unitaries (i.e.~two-qubit logical gates, in a computational setting) are required to implement a certain multi-qubit unitary (i.e.~a quantum circuit). For the gate complexity of state preparation, it is to determine how many unitaries produce a certain multi-qubit state. An alternative perspective, adopted in this work, is to consider the expressive power of a set of circuits with a fixed number of unitaries.\par 
In this work we describe a new way of applying complexity measures from classical learning, specifically pseudo-dimension, to quantum information. We associate with a quantum circuit a natural probabilistic function class describing the outcome probabilities of measurements performed on the circuit output. In this way, a function class corresponding to a quantum circuit can be studied with the classical tool of pseudo-dimension. Here, we show that the pseudo-dimension of such a class can be bounded in terms of a polynomial of the circuit depth and size. We also give two applications of these bounds, one for the gate complexity of quantum state preparation, the other in learnability of quantum circuits.\par
These findings are noteworthy not only because of the results themselves, but because we demonstrate the power of pseudo-dimension to gain insight into quantum computation. We hope that these tools may be applied to other problems in quantum computing in future work.
\subsection{Related Work}
\citep{Aaronson.2007} showed that using the framework of PAC learning, one can introduce a variant of quantum state tomography and prove an upper bound on the required number of copies of the unknown state. This idea was developed further in \citep{Aaronson.2018b} and \citep{Aaronson.2018}.\par
Motivated by Aaronson's work, \citep{Cheng.2016} use pseudo-dimension and fat shattering dimension to characterize the learnability of measurements, as a dual problem to learning the state. We apply this mathematical framework to study the problem of learning the circuit itself, in particular by offering a natural function class corresponding to a quantum circuit.\par
\citep{Rocchetto.2017} proved that stabilizer states, prevalent in error correction, are \emph{computationally}-efficiently learnable, establishing a connection between efficient classical simulability and computationally efficient learnability. This was realized experimentally for small optical systems in \citep{Rocchetto.2019}. Similarly, in Sec.~\ref{sec:discussion} we pose as an open problem whether there are quantum operations that can be PAC-learned with modest computation, which could then in principle be demonstrated in an experiment.   \par
In \citep{Chung.20191103}, the authors study the problem of PAC-learning classes of functions with computational basis states as input and quantum output, possibly mixed. We highlight two main differences: first, whereas we assume the training data to be measurement statistics, \citep{Chung.20191103} consider examples given as classical-quantum states. Thus, the two scenarios are not directly comparable. Our learning result yields a semi-classical strategy for the problem described in \citep{Chung.20191103}, though it is possibly suboptimal. Second, the learnability result of \citep{Chung.20191103} is only for finite concept classes, whereas our result does not have this restriction. While \citep{Chung.20191103} show learnability of quantum circuits with a finite gate set, we allow for arbitrary $2$-qudit gates, i.e.~a continuous gate set. Note that our corresponding notions of learnability differ.\par 
While we take a formal approach to learning quantum circuits, others have studied learning unitaries numerically, e.g.~with heuristics such as gradient descent \citep{Kiani.20200213}. Practical machine learning algorithms have also been used for state tomography by \citep{Torlai.2018}, and similar techniques could be applied to restricted classes of process tomography. \par
Another branch of quantum learning deals with whether quantum examples can decrease the information-theoretic complexity of learning a classical function. There are different flavors of this question, e.g.~depending on whether learning is distribution-specific or distribution-independent. \citep{Arunachalam.2017} gives an overview of some of these aspects of quantum learning. 
\par
In classical learning theory, bounding the complexity measures of function classes (based on complexity-theoretic assumptions) has been studied widely. \citep{Goldberg.1995} derived an upper bound on the VC-dimension of a function class in terms of the runtime required by an algorithm implementing the elements of that class. \citep{Karpinski.1997} established an analogous bound for the function class implemented by a neural network (for various activation functions) in terms of the number of nodes and the number of programmable parameters of the network. \citep{Koiran.1996} demonstrated that by bounding the complexity of function classes implemented on a given architecture, one can lower-bound the size of an architecture implementing a specific ``hard'' function. 
\subsection{Overview of Results}
We consider the general scenario in which one measures the output state of a $2$-local qu$d$it quantum circuit, generating a probability distribution. 
We do not assume \emph{geometric} locality, i.e.~we do not assume that $2$-qu$d$it unitaries act on neighboring qu$d$its.
We show an upper bound on the pseudo-dimension of the distributions arising from these quantum circuits. By doing so, we provide insight into the complexity or ``hardness'' of the circuit and the output state that gives rise to the probability distribution. Below, we provide informal statements of the key results.
\begin{theorem*}[Pseudo-dimension bounds, Informal] \label{informal_pdbounds}\ \\\\
Consider quantum circuits with fixed architecture, namely those for which the input qudits of the $2$-qudit gates are specified, but the gates may vary subject to this constraint. That is, we allow for arbitrary $2$-qu$d$it unitaries, and in particular we do not restrict ourselves to a finite gate library.
\\Parameterize a quantum circuit $\mathcal{N}$ by its qudit dimension $d,$ depth $\delta,$ and number of gates or size $\gamma.$\\ 
\emph{\textbf{Theorem \ref{ThmPseudoDimFixedQuantumCircuit}}:} For a suitable function class $\mathcal{F}_\mathcal{N}$ corresponding to the possible probability distributions formed by product measurements in the computational basis on the circuit output, $\Pseudo(\mathcal{F}_\mathcal{N})\leq \mathcal{O}(d^4\cdot \gamma\log\gamma).$\\\\
Consider quantum circuits with variable architecture, i.e.~those for which the input qudits of the gates are not specified. For such circuits of depth $\delta$ and number of gates or size $\gamma,$ one may similarly define function classes $\mathcal{F}_{\delta,\gamma}$ for circuits whose gates are unitaries, and $\mathcal{G}_{\delta, \gamma}$ for circuits whose gates are quantum operations, which describe the possible probability distributions formed by product measurements on the circuit output. Then, \\
\emph{\textbf{Theorem \ref{ThmPseudoDimVariableQuantumCircuit}}:} $\Pseudo(\mathcal{F}_{\delta,\gamma})\leq \mathcal{O} (\delta \cdot d^4 \cdot \gamma^2 \log \gamma).$\\
\emph{\textbf{Theorem \ref{ThmPseudoDimVariableQuantumCircuitOperations}}:}
$\Pseudo(\mathcal{G}_{\delta,\gamma})\leq \mathcal{O} (\delta \cdot d^8 \cdot \gamma^2 \log \gamma).$
\end{theorem*}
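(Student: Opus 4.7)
My plan is to reduce all three pseudo-dimension bounds to a single classical estimate: if $\mathcal{F} = \{f_\theta : \theta \in \IR^p\}$ is a family of real-valued functions such that for every fixed input $x$ the map $\theta \mapsto f_\theta(x)$ is a polynomial of degree at most $D$, then $\Pseudo(\mathcal{F}) \leq \mathcal{O}(p \log D)$. This is a standard consequence of Warren's bound on the number of sign patterns cut out by systems of polynomials, combined with the identification of pseudo-dimension with the VC-dimension of the subgraph class.

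For Theorem~\ref{ThmPseudoDimFixedQuantumCircuit}, I would parameterize each of the $\gamma$ gates by the $\mathcal{O}(d^4)$ real entries of its $d^2 \times d^2$ matrix (the unitarity constraint only restricts the parameter domain and does not increase pseudo-dimension). For any fixed computational-basis input $|x\rangle$ and product-measurement outcome $|y\rangle$, the amplitude $\langle y | U_\gamma \cdots U_1 | x \rangle$ is multilinear in the gates, hence a polynomial of degree $\gamma$ in the real parameters; squaring yields a polynomial of degree $D = 2\gamma$ in $p = \mathcal{O}(d^4 \gamma)$ parameters, and the lemma above gives $\Pseudo(\mathcal{F}_{\mathcal{N}}) \leq \mathcal{O}(d^4 \gamma \log \gamma)$.

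For Theorem~\ref{ThmPseudoDimVariableQuantumCircuit}, I would absorb the combinatorial architecture choice into a single fixed \emph{universal} architecture. A size-$\gamma$ circuit uses at most $n \leq 2\gamma$ qudits and arranges its gates into $\delta$ layers of pairwise disjoint $2$-qudit unitaries. I would define a fixed architecture consisting of $\delta$ super-layers, each of which sequentially applies a freely parameterized $2$-qudit gate at \emph{every} pair of qudit positions, giving $\binom{n}{2} = \mathcal{O}(\gamma^2)$ gates per super-layer. Any variable-architecture circuit of depth $\delta$ and size $\gamma$ is realised inside this universal architecture by setting unused gate positions to the identity, so $\mathcal{F}_{\delta,\gamma}$ embeds into the fixed-architecture class of the universal circuit, whose size is $\Gamma = \mathcal{O}(\delta \gamma^2)$. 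Applying Theorem~\ref{ThmPseudoDimFixedQuantumCircuit} to this universal circuit then gives $\Pseudo(\mathcal{F}_{\delta,\gamma}) \leq \mathcal{O}(d^4 \Gamma \log \Gamma) = \mathcal{O}(\delta d^4 \gamma^2 \log \gamma)$, where $\log \delta$ is absorbed via $\delta \leq \gamma$. Theorem~\ref{ThmPseudoDimVariableQuantumCircuitOperations} then follows by the same two-step scheme after parameterizing each $2$-qudit quantum operation by its Choi matrix in $\IC^{d^4 \times d^4}$; this gives $\mathcal{O}(d^8)$ real parameters per gate, and Choi composition is bilinear, so the output probability remains a polynomial of total degree $\mathcal{O}(\gamma)$ in these parameters, producing $\Pseudo(\mathcal{G}_{\delta,\gamma}) \leq \mathcal{O}(\delta d^8 \gamma^2 \log \gamma)$.

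The main obstacle I expect is the universal-architecture embedding: one must check that replacing absent gate positions by the identity truly keeps the whole class inside a \emph{single} polynomially-parameterized family, so that the Warren-type bound applies once and no separate discrete union over the super-exponentially many architectures is needed. A secondary subtlety is the Choi-picture degree count in Theorem~\ref{ThmPseudoDimVariableQuantumCircuitOperations}: a Kraus parameterization makes the per-gate degree explicit (namely $2$) but has an ambiguous number of Kraus terms, whereas a Choi parameterization is canonical but requires tracking the bilinearity of map composition carefully, so that the $d^8$ per-gate parameter count is absorbed without leaking an extra factor of $\gamma$ into the final bound.
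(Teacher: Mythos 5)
Your proposal is correct. For Theorem~\ref{ThmPseudoDimFixedQuantumCircuit} your argument --- multilinearity of the output amplitude in the gate entries, hence a degree-$2\gamma$ polynomial in $2\gamma d^4$ real parameters, followed by Warren's sign-pattern bound (Corollary~\ref{CrlWarren}) --- is exactly the paper's proof via Lemma~\ref{LmmPolynomialDescribingFixedQuantumCircuit}. For the variable-architecture results, however, you take a genuinely different route. The paper keeps the parameter count at $2\gamma d^4$ and pays for the architectural freedom combinatorially: Lemma~\ref{LmmPolynomialsDescribingVariableQuantumCircuit} bounds the number of distinct architectures by $\frac{\gamma!\,\delta^{\gamma-\delta}}{(\gamma-\delta)!}(n!)^\delta$, Warren's bound is applied to the union of the $m\cdot|\mathcal{P}_{\delta,\gamma}|$ resulting polynomials, and the logarithm of the architecture count --- shown to be $\mathcal{O}(\delta\gamma\log\gamma)$ by a Stirling computation in the appendix --- supplies the extra factor over the fixed case. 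Your universal-architecture embedding (one freely parameterized gate at each of the $\binom{n}{2}\leq 2\gamma^2$ qudit pairs in each of $\delta$ super-layers, unused positions set to the identity, with commutativity of disjoint-pair gates guaranteeing that sequential application within a super-layer reproduces the parallel layer) instead inflates the circuit size to $\Gamma=\mathcal{O}(\delta\gamma^2)$ and invokes Theorem~\ref{ThmPseudoDimFixedQuantumCircuit} once; since that theorem depends only on the total gate count and not on the depth, this is legitimate, and $\delta\leq\gamma$ absorbs the $\log\delta$. Both routes land on $\mathcal{O}(\delta d^4\gamma^2\log\gamma)$, but yours avoids the architecture counting and the Stirling estimate entirely, and in fact naturally yields a bound of order $d^4\,\delta\binom{n}{2}\log\bigl(\delta\binom{n}{2}\bigr)$ in terms of $n$ and $\delta$, which can be sharper than the stated form when $\gamma\gg\delta n$. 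For Theorem~\ref{ThmPseudoDimVariableQuantumCircuitOperations}, the paper sidesteps your Choi-versus-Kraus worry by representing each two-qudit operation directly as a $d^4\times d^4$ transfer matrix, so that composition is matrix multiplication and the output probability is multilinear of degree $\gamma$ in the $\mathcal{O}(d^8)$ real entries per gate with no squaring step; your Choi/link-product bookkeeping reaches the same parameter and degree counts, so no factor of $\gamma$ leaks into the bound.
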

\noindent All upper bounds are polynomial in the dimension $d$, the depth $\delta$, and the size $\gamma$.\par
In Sec.~\ref{sec:lower}, we demonstrate how to apply these complexity upper bounds to explicitly construct, for each $n\in\IN$, a finite-but-large set of $n$-qubit quantum states, out of which at least one cannot be implemented by a $2$-local qudit circuit of subexponential depth or size. \par
\begin{theorem*}[Gate Complexity of State Preparation, Informal]\ \\
\noindent For any subset $C\subseteq\lbrace |x0\rangle\rbrace_{x\in\lbrace 0,1\rbrace^n}$, define
\[
  |\psi_C\rangle =\begin{cases}
               \frac{1}{\sqrt{|C|}}\sum\limits_{|x0\rangle\in C} |x0\rangle\quad&\textrm{if } C\neq\emptyset\\
               |0\rangle^{\otimes n}\otimes |1\rangle &\textrm{if } C=\emptyset.
            \end{cases}
\]
\noindent If each state in $\{|\psi_C\rangle\}_{C}$ can be generated from the input state $|0\rangle^{\otimes (n+1)}$ by some circuit of depth $\delta$ and size $\gamma$, then $2^n \leq \mathcal{O}\left(\delta\cdot \gamma^2 \log \gamma \right).$
As a corollary, there exists at least one such $C$ so that $|\psi_C\rangle$ requires a circuit exponential in depth and size.
\end{theorem*}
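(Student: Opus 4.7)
The plan is to combine the pseudo-dimension upper bound of Theorem~\ref{ThmPseudoDimVariableQuantumCircuit} with a matching lower bound obtained by exhibiting $2^n$ shattered inputs in the family $\{|\psi_C\rangle\}_C$. Associate to each $|\psi_C\rangle$ the function $f_C : \{0,1\}^{n+1} \to [0,1]$ given by $f_C(y) = |\langle y | \psi_C \rangle|^2$, i.e.\ the probability distribution on computational basis outcomes induced by measuring $|\psi_C\rangle$. Under the hypothesis of the theorem every $|\psi_C\rangle$ is prepared from $|0\rangle^{\otimes(n+1)}$ by some $2$-local qubit circuit of depth $\delta$ and size $\gamma$, so each $f_C$ lies in the class $\mathcal{F}_{\delta,\gamma}$ with $d=2$, and Theorem~\ref{ThmPseudoDimVariableQuantumCircuit} yields $\Pseudo(\mathcal{F}_{\delta,\gamma}) \leq \mathcal{O}(\delta \cdot \gamma^2 \log \gamma)$.

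For the lower bound, I will take $2^n$ inputs $y_x := |x0\rangle$ indexed by $x \in \{0,1\}^n$ together with thresholds $t_x := 2^{-(n+1)}$. For each $S \subseteq \{0,1\}^n$ define $C_S := \{|x0\rangle : x \in S\}$. When $S \neq \emptyset$, the measurement statistics of $|\psi_{C_S}\rangle$ give $f_{C_S}(y_x) = 1/|S|$ for $x \in S$ and $f_{C_S}(y_x) = 0$ otherwise; since $1/|S| \geq 2^{-n} > t_x$, the test $f_{C_S}(y_x) > t_x$ exactly realizes the indicator pattern of $S$. When $S = \emptyset$, the exceptional state $|\psi_{C_\emptyset}\rangle = |0\rangle^{\otimes n} \otimes |1\rangle$ is orthogonal to every $y_x$, so $f_{C_\emptyset}(y_x) = 0 < t_x$ realizes the all-zeros pattern. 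All $2^{2^n}$ binary patterns on $\{y_x\}_x$ are thus achieved by members of the class, giving $\Pseudo(\mathcal{F}_{\delta,\gamma}) \geq 2^n$.

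Combining the two bounds produces the claimed inequality $2^n \leq \mathcal{O}(\delta \cdot \gamma^2 \log \gamma)$, and the corollary follows by contradiction: if every $|\psi_C\rangle$ admitted a preparation with $\delta, \gamma = \mathrm{poly}(n)$, then $2^n \leq \mathrm{poly}(n)$, which fails for all sufficiently large $n$. Hence at least one $|\psi_C\rangle$ must require a circuit of depth or size exponential in $n$.

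The hardest part of this plan is picking a combinatorial family rich enough to force an exponential shattering while still fitting inside $\mathcal{F}_{\delta,\gamma}$, and in particular arranging for the all-zeros pattern to be realized by a legitimate member of the class. That is precisely the purpose of the ancilla qubit pinned to $|0\rangle$ within each $C$ and flipped to $|1\rangle$ in the exceptional empty case: it makes $|\psi_{C_\emptyset}\rangle$ orthogonal to every $y_x$, so that $f_{C_\emptyset}$ realizes $f(y_x) \leq t_x$ simultaneously for all $x$ and the shattering does not lose a bit. Once this design choice is in place the rest is an elementary appeal to the upper bound, with no further quantum machinery required.
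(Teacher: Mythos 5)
Your proposal is correct and follows essentially the same route as the paper: shatter the $2^n$ computational basis inputs $|x0\rangle$ with constant thresholds (the paper uses $2^{-n}$ with a non-strict test, you use $2^{-(n+1)}$; both work), use the empty-set state to realize the all-zeros pattern, conclude $\Pseudo \geq 2^n$ for the class of functions induced by $\{|\psi_C\rangle\}_C$, and combine with the upper bound of Theorem~\ref{ThmPseudoDimVariableQuantumCircuit} via the containment in $\mathcal{F}_{\delta,\gamma}$ with $d=2$.
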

Analogously to  \citep{Aaronson.2007}, in Sec.~\ref{learnability} we use our pseudo-dimension bounds to prove a relaxed variant of quantum process tomography, which following Aaronson’s terminology can be called \textit{pretty-good circuit tomography}: 

\begin{theorem*}[Learnability, Informal]
Given a circuit with depth $\Delta$ and size $\Gamma$, both polynomial in the number of qu$d$its and known in advance to the learner, polynomially-many training examples, each a triple of input state, output measurement, and corresponding probability, suffice to learn the quantum operation implemented by a $2$-local quantum circuit of depth $\Delta$ and size $\Gamma$.\\
That is, for confidence $\delta,$ accuracy, $\varepsilon,$ and error margins $\alpha$ and $\beta,$ all in $(0,1)$, a candidate circuit of depth $\Delta$ and size $\Gamma$ that performs sufficiently well (in a sense made rigorous in Sec.~\ref{learnability}) on  $$\mathcal{O}\left(\frac{1}{\varepsilon}\left( \Delta d^8\Gamma^2\log\Gamma\log^2\left(\frac{\Delta d^8\Gamma^2\log\Gamma}{(\beta - \alpha)\varepsilon}\right) + \log\frac{1}{\delta}\right) \right)$$ many samples will with probability at least $1-\delta$ approximate the actual circuit from which the samples are drawn.
\end{theorem*}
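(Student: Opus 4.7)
The proof strategy parallels Aaronson's approach in \citep{Aaronson.2007} for pretty-good state tomography, but shifts the object of learning from a single quantum state to the entire quantum operation implemented by a circuit of known architecture. The crucial input is the pseudo-dimension bound $\Pseudo(\mathcal{G}_{\Delta,\Gamma}) = \mathcal{O}(\Delta d^8 \Gamma^2 \log \Gamma)$ from Theorem \ref{ThmPseudoDimVariableQuantumCircuitOperations}; once this is in hand, the learnability statement becomes a direct consequence of a standard uniform-convergence result applied to this function class.

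First, I would set up the learning problem precisely. The hypothesis class is $\mathcal{G}_{\Delta,\Gamma}$, each element of which sends a pair consisting of an input state and a measurement effect to the corresponding outcome probability of some depth-$\Delta$, size-$\Gamma$ circuit. The unknown target is the function induced by the true circuit, and the training data are i.i.d.\ samples $(\rho_i, E_i, p_i)$ drawn from a fixed but arbitrary distribution over input-measurement pairs, with $p_i$ the true Born-rule probability. Since $\mathcal{G}_{\Delta,\Gamma}$ consists of $[0,1]$-valued functions on this joint input-measurement domain, it fits cleanly into the probabilistic concept class framework.

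Second, I would appeal to a standard pseudo-dimension-based uniform-convergence theorem for $[0,1]$-valued function classes, in the $(\alpha,\beta)$-margin formulation used in \citep{Aaronson.2007} and \citep{Cheng.2016}. The statement I need is: for any class $\mathcal{F}$ of $[0,1]$-valued functions with $\Pseudo(\mathcal{F}) \le D$, with probability at least $1-\delta$ over an i.i.d.\ sample of size
\[
m = \mathcal{O}\!\left(\frac{1}{\varepsilon}\left(D\log^2\!\frac{D}{(\beta-\alpha)\varepsilon} + \log\frac{1}{\delta}\right)\right),
\]
any hypothesis $h\in\mathcal{F}$ that is $\beta$-close to the target on at least a $(1-\eta)$-fraction of the sample is $\alpha$-close to the target on at least a $(1-\eta-\varepsilon)$-fraction of the underlying distribution. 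Substituting $D = \mathcal{O}(\Delta d^8 \Gamma^2 \log \Gamma)$ into this template immediately produces the claimed sample-complexity bound.

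The main obstacle, and essentially the only content beyond invoking standard machinery, is confirming that the $(\alpha,\beta)$-margin pretty-good tomography framework of \citep{Aaronson.2007}, which was originally stated for learning states from measurement-labeled examples, transfers without modification to our setting in which the examples are triples (input state, measurement, probability). The key verification is that the domain of $\mathcal{G}_{\Delta,\Gamma}$ can be taken as the product of state space with measurement space, and that the pseudo-dimension bound of Theorem \ref{ThmPseudoDimVariableQuantumCircuitOperations} indeed refers to this combined domain; granted this, uniform convergence with respect to a joint distribution on inputs and measurements is a direct application of the cited theorem, and one then reads off the sample complexity exactly in the form stated.
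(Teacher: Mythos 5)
Your proposal follows essentially the same route as the paper: restrict to the realizable, proper setting with $\Gamma$ and $\Delta$ known in advance, work with the variable-input version of the operations class (the extension of Theorem \ref{ThmPseudoDimVariableQuantumCircuitOperations} obtained as in Subsec.~\ref{SbSctVariableInputs}), and plug its pseudo-dimension into a standard margin-based sample-complexity bound --- in the paper this is \citep[Corollary 3.3]{Anthony.2000}, stated via the fat-shattering dimension and combined with $\fat_{\mathcal{F}}(\alpha)\leq\Pseudo(\mathcal{F})$ --- which gives exactly the claimed bound. The only slip is that you state the margin guarantee in the wrong direction: with $\beta>\alpha$, the correct implication (Theorem \ref{ThmSampleComplexityViaFatShattering}) is that a hypothesis $\alpha$-close to the target on every training sample is, with probability at least $1-\delta$, $\beta$-close to the target except on an $\varepsilon$-fraction of the distribution, not ``$\beta$-close on the sample implies $\alpha$-close on the distribution.''
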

\noindent In this framework, each training example is a three-tuple of the input state, the observed measurement outcome, and the corresponding measurement probability. Alternately, one may take each training example as a two-tuple of the input state and the measurement outcome, whose probability is the corresponding measurement probability \citep[see][Appendix 8]{Aaronson.2007}.\\ \par

%
We review the basics of quantum information, quantum computation, and classical learning theory in Sec.~\ref{sec:preliminaries}. We also discuss prior classical results as motivation. Sec.~\ref{sec:results} contains our main results on the pseudo-dimension of quantum circuits and the respective proofs. In Sec.~\ref{sec:applications}, we apply these results to fin lower bounds on the gate complexity of quantum state preparation and to a learning problem for quantum operations. We conclude with open questions in Sec.~\ref{sec:discussion}.
\section{Preliminaries}\label{sec:preliminaries}

As our readership includes both physicists and computer scientists, in this section we review the mathematical frameworks of quantum information theory and learning theory. Further details appear in the reference texts \citep{Heinosaari.2013} and \citep{Nielsen.2010}.\par
\subsection{Quantum Information and Computation}
The most general descriptor of a $d$-level quantum system or statistical ensemble thereof is a density matrix, an element of $$\mathcal{S}\left( \IC^d\right):= \{ \rho\in\IC^{d\times d} ~|~ \rho\geq 0,\ \tr[\rho]=1\}.$$ Here, $\rho\geq 0$ means that the matrix $\rho$ is Hermitian and all its eigenvalues are non-negative. An important subset of density matrices is the set of pure states, which are one-dimensional projections. Following Dirac notation, we denote the projector onto the subspace spanned by a unit vector $|\psi\rangle\in\IC^d$ by $|\psi\rangle\langle\psi |$. By the spectral theorem, every quantum state can be written as a convex combination of pure states, though this decomposition is not unique in general.\par
Central to the framework of quantum mechanics is the measurement, the mechanism by which one may observe properties of a quantum system. These are typically described by so-called positive-operator valued measures (POVMs). As we focus on measurements with a finite set of outcomes $\{i\}$, it suffices to think of measurements as collections of so-called effect operators $\{ E_i\}_{i=1}^m$ with $E_i\in\IC^{d\times d}$, $0\leq E_i\leq \mathds{1}_d$, and $\sum\limits_{i=1}^m E_i = \mathds{1}_d$. We denote the set of effect operators by $$\mathcal{E}\left( \IC^d\right):=\{ E\in\IC^{d\times d} ~|~ 0\leq E_i\leq \mathds{1}_d\}.$$ Again, we highlight a special case: if we take an orthonormal basis $\{ |\psi_i\rangle\}_{i=1}^d$ of $\IC^d$, then the set $\{ E_i= |\psi_i\rangle\langle \psi_i|\}_{i=1}^d$ is called a projective measurement.\par
Born's rule connects measurements to measurement outcomes: given a state characterized by a density operator, the effect operator has a corresponding probability $p_i = \text{tr}[\rho E_i]$. Thus the requirement that the effect operators sum to the identity can be seen as probabilities summing to one. In the special case of pure state $\rho =|\psi\rangle\langle\psi|$ and projective measurement $\{ E_i= |\psi_i\rangle\langle \psi_i|\}_{i=1}^d$, the probability of outcome $i$ is $p_i = \tr[\rho E_i]=|\langle\psi|\psi_i\rangle|^2$.\par
So far we have described the components of static quantum theory. The dynamics of quantum states are described by so-called quantum operations, which we denote by 
\begin{align*} 
\mathcal{T}\left(\IC^d \right):= \{ & T:\IC^{d\times d}\to\IC^{d\times d} ~|~ T~ \text{is linear, } \text{completely positive, and trace-non-increasing}\}.
\end{align*} 
Here, a map $T$ is completely positive if $T\otimes Id_n$ is positivity-preserving for every $n\in\IN$. If $T\in\mathcal{T}\left(\IC^d \right)$ is trace-preserving, we call $T$ a quantum channel. An important example is the unitary quantum channel, $T(\rho) = U\rho U^*$ for some unitary $U\in\IC^{d\times d}$.\\
Note that any element of $\mathcal{T}\left(\IC^d \right)$ is a linear map between vector spaces of dimension $d^2$ and can thus be understood as a $d^2\times d^2$ matrix.\par

%
%
%
\subsection{Classical Learning Theory and Complexity Measures}
Next we describe the ``probably approximately correct'' (PAC) model of learning, introduced and formalized by \citep{Vapnik.1971} and \citep{Valiant.1984}. In (realizable) PAC learning for spaces $X$, $Y$ and a concept class $\mathcal{F}\subseteq Y^X$, a learning algorithm receives as input labeled training data $\lbrace (x_i,f(x_i))\rbrace_{i=1}^m$  for some $f\in\mathcal{F}$, where the samples $x_i$ are drawn independently according to some unknown probability distribution $D$ on $X$ that is unknown to the learner. Given the training examples, the goal of the learner is to approximate the unknown function $f$ by a hypothesis function $h,$ with high probability.\par
We can formalize this as follows: first, we introduce a loss function $\ell:Y\times Y\to\mathbb{R}_+$ to quantify the discrepancy between the hypothesis $h$ and the function $f$. 
We call a concept class $\mathcal{F}$ PAC-learnable if there exists a learning algorithm $\mathcal{A}$ such that for every probability distribution $D$ on $X$, $f\in\mathcal{F}$ and $\delta,\varepsilon\in (0,1)$, running $\mathcal{A}$ on training data drawn according to $D$ and $f$ yields a hypothesis $h$ such that $\mathbb{E}_{x\sim D}[\ell(h(x),f(x))]\leq\varepsilon$ with probability $\geq 1-\delta$ (with regard to the choice of training data). 
Moreover, we quantify the minimum amount of training data that an algorithm $\mathcal{A}$ needs to meet the above conditions by a map $m_\mathcal{F}:(0,1)\times (0,1)\to\IN$, $(\delta,\varepsilon)\mapsto m(\delta,\varepsilon) $, the so-called sample complexity of $\mathcal{F}$. We focus on proper learning, in which the learning algorithm must output as its hypothesis an element of the concept class, i.e., we require $h\in\mathcal{F}$.\par
A standard approach to assessing learnability is to characterize the complexity of the respective concept class $\mathcal{F}$. Many such complexity measures are used, the most common being the VC-dimension for binary-valued function classes $\mathcal{F}\subseteq \{0,1\}^X$, named after its progenitors \citep{Vapnik.1971}. This combinatorial parameter can be shown to fully characterize the learnability: a concept class $\mathcal{F}\subseteq \{0,1\}^X$ is PAC-learnable (w.r.t.~the $0$-$1$-loss) if and only if the VC-dimension of $\mathcal{F}$ is finite. Moreover, the sample complexity of PAC learning $\mathcal{F}$ can be expressed in terms of its VC-dimension \citep[see][]{Blumer.1989,Hanneke.2016}.\par
In this work, we employ a widely-used extension of the VC-dimension to real-valued concept classes:
\begin{definition}\emph{(Pseudo-dimension \citep{Pollard.1984})}\label{DffPseudoDim}
Let $\mathcal{F}\subseteq \IR^X$ be a real-valued concept class. A set $\lbrace x_1,...,x_k\rbrace\subseteq X$ is pseudo-shattered by $\mathcal{F}$ if there are $y_1,...,y_k\in\IR$ such that for any $C\subseteq\lbrace 1,...,k\rbrace$ there is an $f_C \in\mathcal{F}$ such that for all $1\leq i\leq k, $
$i\in C$ if and only if $f_C(x_i)\geq y_i.$\\

\noindent The pseudo-dimension of $\mathcal{F}$ is defined to be
\begin{align*}
\Pseudo (\mathcal{F}) := \sup\lbrace n\in\IN_0 ~|~\ &\exists S\subseteq X\text{ s.t. } |S|=n ~\text{and }\text{S is pseudo-shattered by } \mathcal{F}\rbrace.
\end{align*}
\end{definition}
\noindent Alternatively, one can express the pseudo-dimension in terms of the VC-dimension. Namely, 
\begin{align*}
    \Pseudo (\mathcal{F}) = \textrm{VC} (\left\lbrace X\times\IR\ni (x,y)\mapsto\textrm{sgn}(f(x)-y) ~|~ f\in\mathcal{F} \right\rbrace).
\end{align*}
Here, the VC-dimension for a function class $\mathcal{H}\subseteq\{\pm 1\}^Z$ is defined as 
\begin{align*}
    \textrm{VC} (\mathcal{H}) := \sup\lbrace n\in\IN_0 ~|~&\exists z_1,\ldots,z_n\in Z\text{ s.t. }\forall b\in\{\pm 1\}^n~\exists h_b\in\mathcal{H}\text{ s.t. }\forall i:h_b(z_i)=b_i\rbrace.
\end{align*}

\noindent There is also a scale-sensitive version of the pseudo-dimension:
\begin{definition}\emph{(Fat-Shattering Dimension \citep{Alon.1997})}
Let $\mathcal{F}$ be a real-valued concept class and let $\alpha >0$. A set $\lbrace x_1,...,x_k\rbrace
\subseteq X$ is $\alpha$-fat-shattered by $\mathcal{F}$ if there are $y_1,...,y_k\in\IR$ such that for any $C\subseteq\lbrace 1,...,k\rbrace$ there is an $f_C\in\mathcal{F}$ such that for all $1\leq i\leq k$:
\begin{enumerate}
\item $i\notin C$ $\Rightarrow$ $f_C(x_i)\leq y_i-\alpha$ and
\item $i\in C$ $\Rightarrow$ $f_C(x_i)\geq y_i +\alpha$.
\end{enumerate}
The $\alpha$-fat-shattering dimension of $\mathcal{F}$ is defined to be
\begin{align*}
\emph{\fat}_\mathcal{F}(\alpha) := \sup\lbrace n\in\IN_0|\ &\exists S\subseteq X\text{ s.t. } |S|=n \wedge \textrm{S} \text{ is } \alpha\textrm{-fat-shattered by }\mathcal{F}\rbrace.
\end{align*}
\end{definition}

\noindent Note that, trivially, $\emph{\fat}_\mathcal{F}(\alpha)\leq \Pseudo (\mathcal{F})$ holds for every $\alpha >0$ and for every real-valued function class $\mathcal{F}$.\\

Sample complexity upper bounds for $[0,1]$-valued function classes in terms of the fat-shattering dimension have been proved in \citep{Bartlett.1998, Anthony.2000}.
\section{Pseudo-Dimension Bounds for Quantum Circuits}\label{sec:results}
We now formulate how to characterize the expressive power of quantum circuits. 
In particular, we consider circuits with $n$ input registers of qu$d$its, size (i.e.~number of gates) $\gamma,$ and depth (i.e.~number of layers) $\delta$. More precisely, we consider circuits composed of two-qudit unitaries, i.e.~logical gates with two inputs. Note that two-qudit gates include one-qudit gates. We assume that gates in the same layer and acting on disjoint pairs of qudits can act in parallel. Additionally, we assume that each qu$d$it is acted upon by at least one gate, else it effectively does not participate in the circuit.\par
In this section, we assign function classes to quantum circuits and then derive bounds on the pseudo-dimension of these function classes, in terms of the number of qudits and the size and depth of the circuits. First, we fix quantum circuit structure and inputs, varying only the entries of the unitary gates and thereby the resulting function. Then, we broaden our scope to variable circuit architectures, variable inputs, and circuits whose `gates' are general quantum operations.\par
An important tool that will recur throughout our work is the following result on polynomial sign assignments, used in \citep{Goldberg.1995} to derive VC-dimension bounds from computational complexity.
\begin{theorem}\emph{\citep[][Theorem $3$]{Warren.1968}}\label{ThmWarrenNonZero}
Let $\lbrace p_1,\ldots,p_m\rbrace$ be a set of real polynomials in $n$ variables with $m\geq n$, each of degree at most $d\geq 1$. Then the number of consistent non-zero sign assignments to $\lbrace p_1,\ldots,p_m\rbrace$ is at most $\left( \frac{4edm}{n}\right)^n$.
\end{theorem}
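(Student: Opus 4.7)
The plan is to reduce the sign-assignment count to the count of connected components of the complement of a real algebraic variety, bound that count via a Morse-theoretic perturbation argument combined with B\'ezout, and then sharpen the resulting constant by a combinatorial refinement that exploits the product structure of the $m$ polynomials rather than treating $\prod p_i$ as a single polynomial of degree $dm$. As a first step, observe that each consistent non-zero sign assignment $\sigma\in\{-1,+1\}^m$ realizes a non-empty open cell $R_\sigma=\{x\in\IR^n:\mathrm{sgn}(p_i(x))=\sigma_i\ \forall i\}$; distinct sign assignments produce disjoint cells, and each $R_\sigma$ is a union of connected components of $\IR^n\setminus V$ with $V=\bigcup_{i=1}^m\{p_i=0\}$. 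Hence the number of consistent non-zero sign assignments is at most the number of connected components of $\IR^n\setminus V$, which is the geometric quantity I will bound.

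Next I would pass to a smooth perturbation $F_\varepsilon(x)=\prod_{i=1}^m(p_i(x)^2-\varepsilon)$ for generic small $\varepsilon>0$: by Sard's theorem the zero set of $F_\varepsilon$ is a nonsingular hypersurface, and a standard semi-continuity argument shows its complement has at least as many components as $\IR^n\setminus V$. On each bounded component, the Morse function $h(x)=\|x-a\|^2$ for generic $a\in\IR^n$ attains a critical point; these critical points are the common zeros of a polynomial system of $n$ equations of degree $O(dm)$, and B\'ezout then yields a bound of the form $(O(dm))^n$. Unbounded components are handled by intersecting with a large ball whose boundary is crossed transversely by the zero set, or equivalently by a one-point compactification.

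The sharpening to $(4edm/n)^n$ then comes from an inductive/combinatorial argument on $m$ that exploits the product structure: adding each new hypersurface to the arrangement increases the cell count by a controlled amount proportional to $d$, and aggregating over $m$ steps combines with the Sauer--Shelah-type bound $\sum_{i=0}^n\binom{m}{i}\leq(em/n)^n$ (valid since $m\geq n$) to produce the $m/n$ inside the exponent. Equivalently, one may pick any $n$ of the $m$ polynomials, note that their common zero set consists of $O(d^n)$ points by B\'ezout, and sum the resulting $\binom{m}{n}(Cd)^n$-type contributions to arrive at the target form.

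The hard part will be this last step: a direct Milnor--Thom application to $F_\varepsilon$, which has degree $2dm$, only yields $(O(dm))^n$ and therefore loses a factor of roughly $n^n$ relative to the target. Recovering the $1/n$ inside the exponent requires the separate-polynomial structure of the arrangement to be used rather than just the degree of the product, and simultaneously tracking the degree bookkeeping and the binomial counts through the induction to land on the precise factor $4e$ is the delicate point of the argument.
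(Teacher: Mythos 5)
First, a point of reference: the paper does not prove this statement at all --- it is quoted verbatim from \citep{Warren.1968}, and only the passage from it to Corollary \ref{CrlWarren} is sketched. So your attempt has to be judged on its own terms, against Warren's original argument. Your skeleton is the historically correct one: bound the number of consistent non-zero sign assignments by the number of connected components of $\mathbb{R}^n\setminus\bigcup_i\{p_i=0\}$, control components of a perturbed nonsingular hypersurface by critical points of a generic distance function plus B\'ezout (the Oleinik--Petrovsky--Milnor--Thom line of reasoning), and recover the $m/n$ inside the exponent from $\sum_{i=0}^{n}\binom{m}{i}\leq(em/n)^n$. That is indeed how Warren proceeds.

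The problem is that the step which actually produces the theorem --- your ``sharpening'' --- is not proved, and the two substitutes you offer for it do not hold as stated. The claim that adding one new degree-$d$ hypersurface ``increases the cell count by a controlled amount proportional to $d$'' is false: already for $m$ lines in the plane ($d=1$) a new line can create up to $m+1$ new cells, so the increment depends on the existing arrangement, not just on $d$. The ``equivalent'' count $\binom{m}{n}(Cd)^n$ via B\'ezout on $n$-subsets also does not follow: cells of the arrangement are not accounted for by common zeros of $n$ of the polynomials (these common zero sets can be empty, infinite, or irrelevant to unbounded cells) unless one first carries out a general-position perturbation and a Morse-theoretic accounting of critical points on \emph{all} intersections of at most $n$ of the perturbed hypersurfaces. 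That accounting --- an induction over the intersections of $i\leq n$ hypersurfaces, each handled by the Milnor--Thom-type bound, summed against $\binom{m}{i}$ --- is precisely Warren's Theorem 2, and it is where the factor $(2d)^n\sum_{i\leq n}\binom{m}{i}2^i$ (hence, with $\sum_{i\leq n}\binom{m}{i}\leq(em/n)^n$, the constant $4e$) comes from. Your outline stops exactly where that work begins, as you yourself flag, so as it stands the proposal is a plan with the decisive lemma missing rather than a proof.
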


\noindent Here, $e$ is Euler's number and a ``consistent non-zero sign assignment'' to a set of polynomials $\lbrace p_1,\ldots,p_m\rbrace$ is a vector $b\in\{ \pm 1\}^m$ s.t.~there exist $x_1,\ldots,x_n\in\IR$ for which it holds that $\textrm{sgn}(p_i(x_1,\ldots,x_n))=b_i$ for all $1\leq i\leq m$.\par
The following implication of Theorem \ref{ThmWarrenNonZero} for consistent but not necessarily non-zero sign assignments (which we define as above, but with $b\in\{-1,0,1\}^m$) to sets of polynomials was observed in \citep[][Corollary 2.1]{Goldberg.1995}.
\begin{corollary}\label{CrlWarren}
Let $\lbrace p_1,\ldots,p_m\rbrace$ be a set of real polynomials in $n$ variables with $m\geq n$, each of degree at most $d\geq 1$. Then the number of consistent sign assignments to $\lbrace p_1,\ldots,p_m\rbrace$ is at most $\left( \frac{8edm}{n}\right)^n$.
\end{corollary}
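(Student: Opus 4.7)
The plan is to reduce to Warren's theorem (Theorem \ref{ThmWarrenNonZero}) by a perturbation trick that turns a sign assignment possibly including zero into a strict (non-zero) sign pattern on a slightly larger polynomial system. Given the polynomials $p_1,\ldots,p_m$, I would associate to them the doubled family $\{p_i-\varepsilon,\ p_i+\varepsilon\}_{i=1}^m$ consisting of $2m$ polynomials in the same $n$ variables, each still of degree at most $d$, where $\varepsilon>0$ is a parameter to be chosen.

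Next I would define the injection. For each consistent sign assignment $b\in\{-1,0,1\}^m$ to $\{p_1,\ldots,p_m\}$, fix a witness $x^{(b)}\in\IR^n$ with $\operatorname{sgn}(p_i(x^{(b)}))=b_i$ for all $i$. Since there are only finitely many consistent sign assignments, the collection of witnesses is finite, and so is the set of nonzero values $\{|p_i(x^{(b)})|:b,i\text{ with }b_i\neq 0\}$. Choose $\varepsilon>0$ strictly smaller than the minimum of this set. Then for each fixed $b$, at the point $x^{(b)}$ I have: if $b_i=+1$, both $p_i(x^{(b)})\pm\varepsilon>0$; if $b_i=-1$, both are negative; and if $b_i=0$, then $p_i(x^{(b)})-\varepsilon<0$ while $p_i(x^{(b)})+\varepsilon>0$. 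In every case, none of the $2m$ perturbed polynomials vanishes at $x^{(b)}$, so $x^{(b)}$ realizes a consistent non-zero sign assignment to the doubled family. Moreover, distinct $b$'s produce distinct non-zero sign patterns, because from the pair of signs $(\operatorname{sgn}(p_i-\varepsilon),\operatorname{sgn}(p_i+\varepsilon))$ one can recover $b_i$ unambiguously (the pair $(-,+)$ encodes $b_i=0$, while $(+,+)$ and $(-,-)$ encode $b_i=\pm1$).

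Now I would invoke Warren's theorem. The hypothesis $m\geq n$ implies $2m\geq n$, so Theorem \ref{ThmWarrenNonZero} applies to the doubled family and yields at most $\bigl(\tfrac{4\,e\,d\,(2m)}{n}\bigr)^n=\bigl(\tfrac{8edm}{n}\bigr)^n$ consistent non-zero sign assignments. By the injection established above, this is also an upper bound on the number of consistent (possibly zero) sign assignments to the original $\{p_1,\ldots,p_m\}$, which is the claim.

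The main obstacle is a technical one rather than a conceptual one: justifying that a single $\varepsilon>0$ can be chosen uniformly so that the perturbation argument works simultaneously at all witness points. This is exactly why the argument is carried out after fixing one witness per realized sign pattern and taking $\varepsilon$ below the minimum of finitely many positive quantities; the finiteness of the set of realizable sign patterns (itself an a priori consequence of Warren-type reasoning applied to $\{p_i\}$ alone, or simply of the fact that $\{-1,0,1\}^m$ is finite) is the essential input that makes the uniform choice of $\varepsilon$ legitimate.
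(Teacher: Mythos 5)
Your proposal is correct and follows exactly the paper's own (sketched) argument: apply Theorem \ref{ThmWarrenNonZero} to the doubled family $\{p_i\pm\varepsilon\}$ with $\varepsilon$ sufficiently small, noting $2m\geq n$ so the bound $\left(\frac{4ed(2m)}{n}\right)^n=\left(\frac{8edm}{n}\right)^n$ applies. You simply fill in the details (the uniform choice of $\varepsilon$ over finitely many witnesses and the injectivity of the sign-pattern map) that the paper leaves implicit.
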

\noindent \begin{proof} (Sketch)
This can be obtained by applying Theorem \ref{ThmWarrenNonZero} to the set $\lbrace p_1+\varepsilon,p_1-\varepsilon,\ldots,p_m+\varepsilon,p_m-\varepsilon\rbrace$ with $\varepsilon>0$ chosen sufficiently small.
\qed \end{proof}
\subsection{Fixed Circuit Structure}\label{pseudo-fixed}
Suppose we fix the architecture of a quantum circuit of depth $\delta$ and size $\gamma$. Specifically, we restrict our attention to 2-local quantum circuits, i.e.~circuits whose logical gates have support on two qudits, not necessarily neighboring each other. (See Fig.~\ref{fig:circuit}.) ``Fixed architecture'' means that we specify the positions of the two-qudit unitaries, namely their order and which qudits they act on. Though the unitaries' positions are fixed, we may vary the entries of the unitaries themselves. Here, we allow for arbitrary $2$-qu$d$it unitaries. In particular, we do not restrict ourselves to a finite gate library. Can we bound the pseudo-dimension of the function class of measurement probability distributions that this circuit generates? And how does the bound depend on $d$ (the dimensionality of the qudits), $\delta$ and $\gamma$?\par
To formalize this question: let $n\in\IN$ be the number of qudits, $d\in\IN$ be their dimensionality, and $\mathcal{N}$ be a fixed quantum circuit architecture of depth $\delta$ and size $\gamma$ acting on $n$ qudits. We enumerate the positions of the two-qudit unitaries in $\mathcal{N}$ by tuples $(i,j)$ with $1\leq i\leq \delta$ denoting the layer and $1\leq j\leq \gamma_i$ the position of the unitary among all the unitaries inside layer $i$, where w.l.o.g.~we count from top to bottom and take into account only the first qudit on which a unitary acts.\par 
Note that  $\sum\limits_{i=1}^\delta \gamma_i=\gamma$, and trivially $\gamma_i\leq\gamma$ and $\gamma_i\leq\frac{n}{2}$, as we assume that every qudit is acted upon by at least one gate. We write the unitary at position $(i,j)$ as $U^{(i,j)}$. These constitute the ``free parameters'' which we can vary in order to make the quantum circuit perform different tasks. The overall unitary implemented by $\mathcal{N}$ when plugging in the unitaries $\lbrace U^{(i,j)}\rbrace_{1\leq i\leq \delta, 1\leq j\leq \gamma_i}$ at the respective positions we denote by $U_{\mathcal{N}|\lbrace U^{(i,j)}\rbrace}$. Note that $U_{\mathcal{N}|\lbrace U^{(i,j)}\rbrace}$ strongly depends on the two-qudit unitaries that are plugged into the architecture, but sometimes we will suppress this dependence and simply write $U_\mathcal{N}$ for notational ease.\par
The quantum circuit $\mathcal{N}$ now gives rise to the following set of output states:
\begin{align*}
\mathcal{S}_\mathcal{N}\left( (\IC^d)^{\otimes n}\right):=\lbrace U_{\mathcal{N}|\lbrace U^{(i,j)}\rbrace}|0\rangle^{\otimes n} |\ U^{(i,j)}\in\mathcal{U}\left( (\IC^d)^{\otimes 2}\right)\rbrace.
\end{align*}
These output states in turn give rise to a function class of measurement probability distributions with regard to product measurements:
\begin{align*}
\mathcal{F}_\mathcal{N}:= \lbrace f:X\to [0,1]\ |\ \exists |\psi\rangle\in \mathcal{S}_\mathcal{N}\left( (\IC^d)^{\otimes n}\right):~f(x)=|\langle x|\psi\rangle|^2\rbrace,
\end{align*}
where we take $X=S_d\times\ldots\times S_d$ to be the Cartesian product of $n$ unit spheres of $\IC^d$.\par
%
%
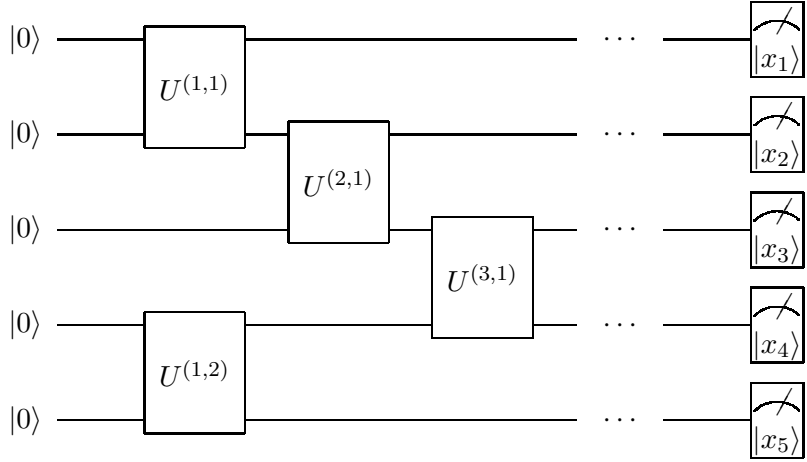
\begin{figure}[h]
\centering \leavevmode
\mbox{
\Qcircuit @C=1.5em @R=0.7em {
\lstick{|0\rangle} & \qw & \multigate{1}{U^{(1,1)}} & \qw & \qw & \qw & \cdots & & \qw & \meterB{|x_1\rangle}\\
\lstick{|0\rangle} & \qw & \ghost{U^{(1,1)}} & \multigate{1}{U^{(2,1)}} & \qw & \qw & \cdots & & \qw & \meterB{|x_2\rangle}\\
\lstick{|0\rangle} & \qw & \qw & \ghost{U^{(2,1)}} & \multigate{1}{U^{(3,1)}} & \qw & \cdots & & \qw & \meterB{|x_3\rangle}\\
\lstick{|0\rangle} & \qw & \multigate{1}{U^{(1,2)}} & \qw & \ghost{U^{(3,1)}} & \qw & \cdots & & \qw & \meterB{|x_4\rangle}\\
\lstick{|0\rangle} & \qw & \ghost{U^{(1,2)}} & \qw & \qw & \qw & \cdots & & \qw & \meterB{|x_5\rangle}
}}\par
    \caption{An example 2-local circuit. $U^{(i,j)}$ denotes the $j^{\textrm{th}}$ $2$-qudit unitary in the $i^{\textrm{th}}$ layer of the circuit}
    \label{fig:circuit}
\end{figure}
%
%

The main insight of this subsection is the following:

\begin{theorem}\label{ThmPseudoDimFixedQuantumCircuit}
With the notation and assumptions from above, it holds that
$$\Pseudo(\mathcal{F}_\mathcal{N})\leq 8d^4\cdot \gamma\cdot \log(16e\cdot\gamma).$$
\end{theorem}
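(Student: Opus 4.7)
The plan is to apply the Goldberg--Jerrum polynomial sign-assignment method, using Corollary~\ref{CrlWarren} as the combinatorial engine. First I would use the VC-style formulation of pseudo-dimension recorded just after Definition~\ref{DffPseudoDim}: if $\{x_1,\ldots,x_k\}\subseteq X$ is pseudo-shattered with witnessing thresholds $y_1,\ldots,y_k$, then all $2^k$ sign patterns of the vector $(f(x_i)-y_i)_{i=1}^k$ must be realized as $f$ ranges over $\mathcal{F}_\mathcal{N}$. It therefore suffices to upper bound the number of consistent sign assignments that the $k$ real-valued functions $p_i(\theta) := f_\theta(x_i)-y_i$ can achieve as the parameter vector $\theta$ varies over the free gate entries.

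Next I would identify each $2$-qudit gate $U^{(i,j)}$ with its $d^4$ complex (i.e.~$2d^4$ real) matrix entries, treated as unconstrained real variables. Restricting these to the unitary slice can only reduce the set of achievable sign patterns, so I can safely drop the unitarity constraint. This gives $N := 2d^4\gamma$ real parameters in total. The circuit unitary $U_{\mathcal{N}}$ is a product of $\gamma$ gate unitaries (each tensored with appropriate identities), so every entry of $U_\mathcal{N}$ is a polynomial of total degree at most $\gamma$ in $\theta$. For fixed $x_i\in X$, write $\langle x_i|U_\mathcal{N}|0\rangle^{\otimes n} = \alpha_i(\theta)+\mathrm{i}\beta_i(\theta)$ with real polynomials $\alpha_i,\beta_i$ of degree at most $\gamma$; then $p_i(\theta) = \alpha_i(\theta)^2+\beta_i(\theta)^2 - y_i$ is a real polynomial of total degree at most $D := 2\gamma$ in the $N$ real variables.

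With the parameter count and degree in hand, I would invoke Corollary~\ref{CrlWarren}. The case $k < N$ makes the target inequality trivial, so I may assume $k \geq N$. The corollary then bounds the number of consistent sign assignments of $\{p_1,\ldots,p_k\}$ by $(8eDk/N)^N = (8ek/d^4)^{2d^4\gamma}$. Shattering requires $2^k$ distinct sign patterns, so $2^k \leq (8ek/d^4)^{2d^4\gamma}$, which rearranges to $k \leq 2d^4\gamma\log_2(8ek/d^4)$. The last step is to unwind this implicit bound: substituting the candidate $k = 8d^4\gamma\log_2(16e\gamma)$ back into the right-hand side and simplifying confirms consistency, which yields the claimed upper bound on $\Pseudo(\mathcal{F}_\mathcal{N})$.

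The main obstacle I foresee is not conceptual but arithmetic. One must carefully track the total polynomial degree, ensuring in particular that the tensor products of gates within a single layer contribute multilinearly, so that degrees add to exactly $\gamma$ across the whole product rather than being amplified by $\delta$. Then one must chase constants through the iterative solution of the implicit inequality $k \leq a\log_2(bk)$ tightly enough to land on the specific clean bound $8d^4\gamma\log(16e\gamma)$ rather than a looser expression that still carries a $d^4$ inside the logarithm.
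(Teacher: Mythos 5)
Your proposal follows the paper's proof essentially step for step: treat the $2\gamma d^4$ real gate entries as unconstrained parameters (the paper likewise never uses unitarity), note that each shattering point yields a polynomial $|\langle x_i|U_\mathcal{N}|0\rangle^{\otimes n}|^2 - y_i$ of degree at most $2\gamma$ in those parameters (the paper's Lemma~\ref{LmmPolynomialDescribingFixedQuantumCircuit}), and then bound $2^m$ by Corollary~\ref{CrlWarren} exactly as in the paper. The only loose point is the final unwinding of $k\leq 2d^4\gamma\log_2(8ek/d^4)$: merely ``substituting the candidate and confirming consistency'' is not by itself a valid inference, and the paper instead splits the logarithm as $\log(16e\gamma)+\log\bigl(k/(2\gamma d^4)\bigr)$ and argues by cases (giving $4\gamma d^4\log(16e\gamma)$ or $8\gamma d^4$, both below the stated bound), a routine fix you should incorporate.
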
\par
Here and throughout the paper, $\log$ denotes the logarithm to base $2$.\par
To prove this result, we provide the following. 
\begin{lemma}\label{LmmPolynomialDescribingFixedQuantumCircuit}
With the notation and assumptions from above, there exists a polynomial $p_\mathcal{N}$ with real coefficients, in $2\gamma d^4 + 2dn$ real variables of degree $\leq 2(\gamma+n)$ such that every $f\in\mathcal{F}_\mathcal{N}$ can be obtained from $p_\mathcal{N}$ by fixing values for the first $2\gamma d^4$ variables. Moreover, in each term of $p$, the degree in the first $2\gamma d^4$ real variables is $\leq 2\gamma$ and the degree in the last $2dn$ real variables is $\leq 2n$.
\end{lemma}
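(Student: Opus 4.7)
My plan is to construct $p_\mathcal{N}$ directly from the definition of $\mathcal{F}_\mathcal{N}$ and check the degree claims by tracking monomial degrees through the circuit. First I would parameterize the free data: each two-qudit unitary $U^{(i,j)}$ is a $d^2\times d^2$ complex matrix, hence specified by $d^4$ complex entries, i.e.~$2d^4$ real parameters (real and imaginary parts); across the $\gamma$ gate positions this gives $2\gamma d^4$ real variables, which I collect into a vector $u$. Each $|x_k\rangle\in S_d$ has $2d$ real components, yielding $2dn$ real variables across the $n$ registers, which I collect into a vector $\xi$.

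Next I would assemble the amplitude. Writing $U_\mathcal{N}=V_\gamma\cdots V_1$, with each $V_\ell$ being the two-qudit gate $U^{(i,j)}$ at the $\ell$-th position extended by identities on the remaining qudits, every entry of $V_\ell$ is $0$, $1$, or a single complex entry of its underlying gate. Hence in the path-sum expansion of $U_\mathcal{N}$, every nonzero summand is a product of exactly $\gamma$ gate entries — one per layer — so every monomial appearing in any entry of $U_\mathcal{N}$, once expanded in the real variables $u$, has degree exactly $\gamma$ in $u$. Extracting the column $U_\mathcal{N}|0\rangle^{\otimes n}$ preserves this property. The bra $\langle x|=\langle x_1|\otimes\cdots\otimes\langle x_n|$ has components of the form $\overline{(x_1)_{i_1}}\cdots\overline{(x_n)_{i_n}}$, each of which expands to monomials of degree exactly $n$ in $\xi$. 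Pairing the two, the complex scalar $\langle x|U_\mathcal{N}|0\rangle^{\otimes n}$ is, in both its real and imaginary parts, a real polynomial in $(u,\xi)$ in which every monomial has degree exactly $\gamma$ in $u$ and exactly $n$ in $\xi$.

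Finally I would define
\[
p_\mathcal{N}(u,\xi):=\bigl(\Re\langle x|U_\mathcal{N}|0\rangle^{\otimes n}\bigr)^2+\bigl(\Im\langle x|U_\mathcal{N}|0\rangle^{\otimes n}\bigr)^2.
\]
Squaring doubles per-block degrees, so $p_\mathcal{N}$ is a real polynomial in $2\gamma d^4+2dn$ variables whose monomials have degree $\leq 2\gamma$ in $u$, degree $\leq 2n$ in $\xi$, and total degree $\leq 2(\gamma+n)$. For any choice of two-qudit unitaries $\{U^{(i,j)}\}$, substituting their real parameters into the first $2\gamma d^4$ slots produces a polynomial in $\xi$ which, evaluated at the real parameters of $|x\rangle\in X$, returns $|\langle x|U_{\mathcal{N}|\{U^{(i,j)}\}}|0\rangle^{\otimes n}|^2$ — the corresponding $f\in\mathcal{F}_\mathcal{N}$.

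The argument uses nothing beyond Born's rule and degree counting for matrix multiplication, so I do not expect a serious obstacle; the one point that deserves care is the homogeneity bookkeeping — specifically, the observation that every nonzero term in the path expansion of $V_\gamma\cdots V_1$ picks up a genuine gate entry (not an identity $1$) from every single layer, so each monomial is homogeneous of degree exactly $\gamma$ in $u$ rather than merely bounded by $\gamma$. Note that unitarity is nowhere invoked, so $p_\mathcal{N}$ is well-defined over arbitrary complex matrix entries, which is convenient for the polynomial sign-count applications of Corollary \ref{CrlWarren} downstream.
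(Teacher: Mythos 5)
Your proposal is correct and takes essentially the same route as the paper's proof: both expand the amplitude $\langle x|U_{\mathcal{N}}|0\rangle^{\otimes n}$ as a polynomial of degree $\leq\gamma$ in the real and imaginary parts of the gate entries and of degree $\leq n$ in the product-state amplitudes of $|x\rangle$, and then take the modulus squared to obtain $p_\mathcal{N}$ with the stated per-block degrees $2\gamma$ and $2n$; the paper does the bookkeeping layer by layer while you do it via the gate-by-gate path sum, which is the same argument. Only a wording slip: each nonzero path term picks up one genuine gate entry per \emph{gate} (hence exactly $\gamma$ of them), not ``one per layer'' (which would suggest $\delta$), but your degree count of exactly $\gamma$ is the correct and intended one.
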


\noindent Notably, there is no explicit dependence on depth $\delta.$

\noindent \begin{proof}
We first observe that
\begin{align*}
|\langle x|U_\mathcal{N}|0\rangle^{\otimes n}|^2 = |\langle 0|^{\otimes n} U_\mathcal{N}^\dagger |x\rangle|^2.
\end{align*}
We study this expression in a layer-wise analysis. When reading the circuit from right to left, 
the state that enters layer $\delta$ is transformed by the unitary $\bigotimes\limits_{j=1}^{\gamma_\delta} U^{(\delta,j)\dagger}$ such that each amplitude of the state after the $\delta^{th}$ layer is a linear combination of the amplitudes of $|x\rangle$, where each coefficient is a multilinear monomial of degree $\gamma_\delta$ in some of the $\gamma_\delta \cdot d^4$ complex entries of the $\lbrace U^{(\delta,j)\dagger}\rbrace_{1\leq j\leq \gamma_\delta}$.\par
%
By iterating this reasoning, we see that the state after the $(\delta - i)^{th}$ layer has amplitudes which are given by a linear combination of the amplitudes of $|x\rangle$, where each coefficient is a multilinear polynomial of degree $\leq \sum\limits_{k=0}^i \gamma_{\delta - k}$ in (some of) the entries of the unitaries $\lbrace U^{(\delta - k,j_k)\dagger}\rbrace_{0\leq k\leq i, 1\leq j_k\leq \gamma_k}$.\par
In particular, the $|0\rangle^{\otimes n}$-amplitude of the state $U_\mathcal{N}^\dagger |x\rangle$ can be written as a linear combination of the amplitudes of $|x\rangle$, where each coefficient is given by a multilinear polynomial $q_\mathcal{N}$ of degree $\leq \sum\limits_{k=0}^\delta \gamma_{\delta - k}=\gamma$ in (some of) the $\gamma\cdot d^4$ complex entries of the unitaries $\lbrace U^{(i,j_i)\dagger}\rbrace_{0\leq i\leq\delta, 1\leq j_i\leq \gamma_i}$.\par 
Recalling that the probability of observing outcome $|0\rangle^{\otimes n}$ is the square of the absolute value of the corresponding amplitude of $|x\rangle,$ we obtain from the polynomial $q_\mathcal{N}$ a polynomial $p_\mathcal{N} = |q_\mathcal{N}|^2$ that describes the output probabilities. As $q_\mathcal{N}$ has degree at most $\gamma$ in the $\gamma \cdot d^4$ complex parameters of the unitaries, $p_\mathcal{N}$ has degree at most $2\gamma$ in the corresponding $2\gamma \cdot d^4$ real parameters. Fixing these $2\gamma d^4$ parameters corresponds to fixing the circuit, and therefore one may obtain every $f \in \mathcal{F}_\mathcal{N}$ by fixing these parameters in $p_\mathcal{N}.$\par
Moreover, $p_\mathcal{N}$ is a polynomial in the $2dn$ real parameters which give rise to the amplitudes of $|x\rangle$. (Here, the assumption that $|x\rangle$ is a product state enters.) As each such amplitude has degree $\leq n$ in the $2dn$ complex parameters, the degree of $p_\mathcal{N}$ in these real parameters is at most $2n.$
\qed \end{proof}
\begin{remark}
We formulate the result only for measurement operators consisting of tensor products of $1$-dimensional projections, and continue to do so throughout this manuscript. For $x \in X,$ we can write $|x\rangle = \bigotimes\limits_{i=1}^n \left( \sum\limits_{j=0}^{d-1} \alpha^{(i)}_j |j\rangle\right)$, so we associate $dn$ complex variables with $x$. That each amplitude of $|x\rangle$ can be written as a product of $n$ complex parameters gives rise to the upper bound of $n$ in the degree.\par
We could instead look at more general measurement operators consisting of $1$-dimensional projections without requiring product structure, i.e.~entangled measurements. In this scenario, we would write $|x\rangle=\sum\limits_{z\in\{ 0,\ldots,d-1 \}^n } x_z |z\rangle$, associating $d^n$ complex variables with $x$. In this setup, each amplitude of $x$ is simply a polynomial of degree $1$ in these complex variables.\par
As we fix the variables corresponding to $x$ and $y$ in the shattering assumption that appears in our proof of Theorem \ref{ThmPseudoDimFixedQuantumCircuit}, their corresponding degrees are not relevant to our argument; only the degree in the entries of the unitaries enters our analysis. Therefore, both product measurements or entangled measurements lead to the same pseudo-dimension bound. This is due to the fact that allowing for entangled measurements changes the set of allowed inputs but not the function class itself.\end{remark}
Now that we have established Lemma \ref{LmmPolynomialDescribingFixedQuantumCircuit}, we can prove Theorem \ref{ThmPseudoDimFixedQuantumCircuit} with reasoning analogous to that in \citep{Goldberg.1995}.\\

\begin{proof} (Theorem \ref{ThmPseudoDimFixedQuantumCircuit})
Let $\lbrace (x_i,y_i)\rbrace_{i=1}^m\subseteq X\times\IR$ be such that for every $C\subseteq\lbrace 1,\ldots,m\rbrace$ there exists $f_C\in\mathcal{F}_\mathcal{N}$ such that 
$
f_C(x_i) - y_i\geq 0$ if and only if $i\in C.
$\par
By Lemma \ref{LmmPolynomialDescribingFixedQuantumCircuit}, there exists a polynomial $p_\mathcal{N}$ in $2\gamma d^4 + 2dn$ real variables of degree $\leq 2(\gamma + n)$ such that for every $C\subseteq\lbrace 1,\ldots,m\rbrace$ there exists an assignment $\Xi_C$ to the first $2\gamma d^4$ variables of $p_\mathcal{N}$ such that 
$
p_\mathcal{N}(\Xi_C,x_i) - y_i \geq 0$ if and only if $i\in C.
$\par
In particular, this implies (using the ``moreover'' part of Lemma \ref{LmmPolynomialDescribingFixedQuantumCircuit}) that the set 
$
\mathcal{P} = \lbrace p_\mathcal{N}(\cdot,x_i) - y_i\rbrace_{i=1}^m
$
is a set of $m$ polynomials of degree $\leq 2\gamma$ in $2\gamma d^4$ real variables that has at least $2^m$ different consistent sign assignments.\par
We now claim that $m\leq 8d^4\cdot\gamma\cdot\log(16e\cdot\gamma) $. If $m<2\gamma d^4$, this holds trivially. Hence, w.l.o.g.~$m\geq 2\gamma d^4$. So by Corollary \ref{CrlWarren}, we have
\begin{align*}
2^m \leq \left(\frac{8e\cdot 2\gamma\cdot m}{2\gamma d^4}\right)^{2\gamma d^4}.
\end{align*}\par
Taking logarithms now gives
\begin{align*}
m\leq 2\gamma d^4\left(\log(16e\cdot\gamma) + \log\left(\frac{m}{2\gamma d^4}\right)\right).
\end{align*}\\
Now we distinguish cases. If $16e\cdot\gamma\geq \frac{m}{2\gamma d^4}$, then the above immediately implies $m\leq 4\gamma d^4\cdot \log(16e\cdot\gamma)$. If $16e\cdot\gamma\leq \frac{m}{2\gamma d^4}$, then we obtain $m\leq 4\gamma d^4\cdot\log\left(\frac{m}{2\gamma d^4}\right)$, which in turn implies $m\leq 8\gamma d^4$. In both cases we have $m\leq 8d^4\cdot\gamma\cdot\log(16e\gamma)$. By definition of the pseudo-dimension, we conclude $\Pseudo(\mathcal{F}_\mathcal{N})\leq 8d^4\cdot \gamma\cdot \log(16e\gamma)$, as claimed.
\qed \end{proof}

The attentive reader may notice that we do not explicitly refer to the unitarity assumption in our reasoning; our argument mainly uses linearity. This already hints at a generalization to quantum circuits not of unitaries but of operations, which we will describe in Subsec.~\ref{SbSctQuOperations}. In that subsection, we will also see how the unitarity assumption implicit in this proof produces a better upper bound than in the general setting of quantum operations. 
\begin{remark}
We formulate our bounds in terms of the pseudo-dimension, not its scale-sensitive version called fat-shattering dimension, even though the latter is more commonly used in classical learning. In our scenario, however, the pseudo-dimension and the fat-shattering dimension effectively coincide. This is because we could apply our reasoning for general matrices instead of only unitaries in the setting of Theorem \ref{ThmPseudoDimFixedQuantumCircuit} as well and achieve the same bounds. In that case, however, the resulting real-valued function class is closed under scalar multiplication with non-negative scalars and it follows from the definition that for such classes, the fat-shattering dimension equals the pseudo-dimension.
\end{remark}
\subsection{Variable Circuit Structure}\label{pseudo-variable}
Whereas in the previous subsection we fixed a quantum circuit architecture and only varied the entries of the two-qudit unitaries plugged into this structure, we now additionally vary the structure of the quantum circuit architecture itself and consider the complexity of the class of all quantum circuits of a given depth and size. Once again, we consider 2-local quantum circuits, i.e.~circuits with one- and two-qudit gates acting on arbitrary pairs of qudits.\\

The class of states which is of relevance in this analysis is
\begin{align*}
\mathcal{S}_{\delta,\gamma}\left((\IC^d)^{\otimes n}\right)
:= \lbrace |\psi\rangle\ |\ \exists \textrm{ quantum circuit } \mathcal{N} \textrm{ of depth } \delta\textrm{ and size } \gamma \textrm{ such that } |\psi\rangle\in \mathcal{S}_\mathcal{N}\left( (\IC^d)^{\otimes n}\right)\rbrace.
\end{align*}
Again, this set of states gives rise to a function class via
\begin{align*}
\mathcal{F}_{\delta,\gamma}
:= \lbrace f:X\to [0,1]\ |\ \exists |\psi\rangle\in \mathcal{S}_{\delta,\gamma}\left((\IC^d)^{\otimes n}\right):~f(x)=|\langle x|\psi\rangle|^2\rbrace,
\end{align*}
where $X$ is as above given by $X=S_d\times\ldots\times S_d.$ As before, we want to bound the pseudo-dimension of this function class.\\

We summarize the result of this subsection in the following: 
\begin{theorem}\label{ThmPseudoDimVariableQuantumCircuit}
With the notation and assumptions from above, it holds that
$$\Pseudo(\mathcal{F}_{\delta,\gamma})\leq \mathcal{O} (\delta \cdot d^4 \cdot \gamma^2 \log \gamma).$$
\end{theorem}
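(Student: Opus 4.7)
The plan is to reduce Theorem \ref{ThmPseudoDimVariableQuantumCircuit} to the fixed-architecture Theorem \ref{ThmPseudoDimFixedQuantumCircuit} by exhibiting a single \emph{universal} fixed architecture $\mathcal{N}_{\mathrm{univ}}$ whose function class already contains $\mathcal{F}_{\delta,\gamma}$. Since the pseudo-dimension is monotone under set inclusion of function classes, it then suffices to upper bound $\Pseudo(\mathcal{F}_{\mathcal{N}_{\mathrm{univ}}})$ via the bound already at our disposal.

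First, I would invoke the standing assumption that every qudit is acted upon by at least one gate to conclude $n \leq 2\gamma$, so the qudit count is implicitly controlled by the size parameter. Then I would construct $\mathcal{N}_{\mathrm{univ}}$ by stacking $\delta$ \emph{super-layers}, each containing one two-qudit gate slot for every one of the $\binom{n}{2}\leq 2\gamma^2$ ordered pairs of qudits, placed in some fixed (e.g., lexicographic) order within the super-layer. The total size is therefore $\gamma_{\mathrm{univ}} := \delta\binom{n}{2} \leq 2\delta\gamma^2$.

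The key verification is $\mathcal{F}_{\delta,\gamma} \subseteq \mathcal{F}_{\mathcal{N}_{\mathrm{univ}}}$. Given any architecture $\mathcal{N}$ of depth $\delta$ and size $\gamma$, I place each of its gates into the slot of $\mathcal{N}_{\mathrm{univ}}$ matching its layer and qudit pair, and set every remaining slot to the identity. The parallelism assumption guarantees that within any single layer of $\mathcal{N}$ the non-identity gates act on pairwise disjoint qudits; since two-qudit unitaries on disjoint qudits commute, and identity gates trivially commute with everything, the product taken over any super-layer of $\mathcal{N}_{\mathrm{univ}}$ is independent of the fixed intra-super-layer ordering and exactly reproduces the corresponding layer of $\mathcal{N}$. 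Consequently $\mathcal{N}_{\mathrm{univ}}$ realizes the same overall unitary, and hence the same output probability function, for this choice of gate settings.

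Applying Theorem \ref{ThmPseudoDimFixedQuantumCircuit} with $\gamma_{\mathrm{univ}} \leq 2\delta\gamma^2$ and using $\delta \leq \gamma$ (WLOG, as empty layers may be removed without changing the circuit's action) to absorb $\log\delta$ into $\log\gamma$ yields
\[
\Pseudo(\mathcal{F}_{\delta,\gamma}) \leq \Pseudo(\mathcal{F}_{\mathcal{N}_{\mathrm{univ}}}) \leq 8d^4 \cdot 2\delta\gamma^2 \cdot \log\!\bigl(32e\,\delta\gamma^2\bigr) = \mathcal{O}\!\left(\delta\, d^4\, \gamma^2 \log \gamma\right),
\]
as desired. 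The main obstacle I anticipate is the combinatorial identification: one must design the universal architecture so that a \emph{single} set of polynomial parameters captures every variable circuit of the prescribed depth and size, and this hinges crucially on the disjoint-support property of parallel gates, which is precisely what makes identity-padding commute past the fixed intra-super-layer ordering.
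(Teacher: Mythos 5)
Your proposal is correct, but it takes a genuinely different route from the paper. The paper keeps the variable-architecture class as a union over architectures: Lemma \ref{LmmPolynomialsDescribingVariableQuantumCircuit} counts the number of distinct architectures (at most $\frac{\gamma!\,\delta^{\gamma-\delta}}{(\gamma-\delta)!}(n!)^\delta$), associates one polynomial to each via Lemma \ref{LmmPolynomialDescribingFixedQuantumCircuit}, and then applies Corollary \ref{CrlWarren} to the enlarged family of $m\cdot|\mathcal{P}_{\delta,\gamma}|$ polynomials, with the $\delta\gamma\log\gamma$ factor emerging from a Stirling estimate of the logarithm of the architecture count. You instead embed every depth-$\delta$, size-$\gamma$ circuit into a single universal fixed architecture with $\delta\binom{n}{2}\leq 2\delta\gamma^2$ gate slots (using $n\leq 2\gamma$), pad unused slots with identities, and invoke monotonicity of the pseudo-dimension together with Theorem \ref{ThmPseudoDimFixedQuantumCircuit}; the disjoint-support property of gates within a layer is exactly what makes the intra-super-layer ordering immaterial, as you note. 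Two small remarks: the $\binom{n}{2}$ slots are for \emph{unordered} pairs (with an orientation convention absorbed into the arbitrary two-qudit unitary), and your ``super-layers'' are not layers in the paper's parallelism sense since their gates overlap --- but this is harmless because the fixed-architecture bound depends only on the total gate count, not on depth (as the paper itself emphasizes after Lemma \ref{LmmPolynomialDescribingFixedQuantumCircuit}). Both arguments land on the same $\mathcal{O}(\delta\, d^4\, \gamma^2\log\gamma)$ bound; yours buys a cleaner proof with no architecture counting or Stirling asymptotics, and the containment $\mathcal{F}_{\delta,\gamma}\subseteq\mathcal{F}_{\mathcal{N}_{\mathrm{univ}}}$ is of independent interest, while the paper's counting approach keeps explicit track of how the architecture multiplicity enters and transfers verbatim to the quantum-operations setting of Theorem \ref{ThmPseudoDimVariableQuantumCircuitOperations} (though your identity-padding argument would also carry over with identity channels).
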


As with Theorem \ref{ThmPseudoDimFixedQuantumCircuit}, the main step towards this result consists of relating the functions appearing in $\mathcal{F}_{\delta,\gamma}$ to polynomials. The difference here is that we must upper bound the number of polynomials, as below.

\begin{lemma}\label{LmmPolynomialsDescribingVariableQuantumCircuit}
With the notation and assumptions from above, there exists a set $\mathcal{P}_{\delta,\gamma}$ of polynomials with real coefficients, in $2\gamma d^4 + 2dn$ real variables of degree $\leq 2(\gamma+n)$ such that for every $f\in\mathcal{F}_{\delta,\gamma}$ there exists a polynomial $p\in\mathcal{P}_{\delta,\gamma}$ such that $f$ can be obtained from $p$ by fixing values for the first $2\gamma d^4$ variables, and such that
\begin{align*}
|\mathcal{P}_{\delta,\gamma}|
\leq\frac{\gamma! ~\delta^{\gamma-\delta}}{(\gamma-\delta)!}(n!)^\delta.
\end{align*}
Moreover, in each term of $p\in\mathcal{P}_{\delta,\gamma}$ the degree in the first $2\gamma d^4$ real variables is $\leq 2\gamma$ and the degree in the last $2dn$ real variables is $\leq 2n$.
\end{lemma}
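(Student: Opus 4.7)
The plan is to reduce the variable-architecture case to the fixed-architecture case by taking a union over all possible depth-$\delta$, size-$\gamma$ architectures $\mathcal{N}$ on $n$ qudits. By definition, $\mathcal{F}_{\delta,\gamma}=\bigcup_\mathcal{N}\mathcal{F}_\mathcal{N}$, and Lemma \ref{LmmPolynomialDescribingFixedQuantumCircuit} associates to each architecture $\mathcal{N}$ a polynomial $p_\mathcal{N}$ in $2\gamma d^4+2dn$ real variables of degree $\leq 2(\gamma+n)$ such that every $f\in\mathcal{F}_\mathcal{N}$ is obtained from $p_\mathcal{N}$ by fixing its first $2\gamma d^4$ variables. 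Crucially, the variable count, the total-degree bound, and the "moreover" statement on partial degrees depend only on $\gamma,n,d$ and not on the specific $\mathcal{N}$, so these properties transfer verbatim to every element of
\[
\mathcal{P}_{\delta,\gamma}:=\{p_\mathcal{N}:\mathcal{N}\text{ is a depth-}\delta,\text{ size-}\gamma\text{ architecture on }n\text{ qudits}\}.
\]

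It then remains to upper bound the number of distinct architectures. I would decompose an architecture into (i) an assignment of the $\gamma$ (labelled) gate slots to the $\delta$ ordered layers, with each layer non-empty by our assumption that every qudit is acted on, and (ii) for each layer $i$ with $\gamma_i$ gates, an assignment of each gate to a pair of qudits, subject to the pairs within a layer being pairwise disjoint. For (i), I would upper bound the number of ordered partitions by $\frac{\gamma!\,\delta^{\gamma-\delta}}{(\gamma-\delta)!}$, obtained by first choosing an ordered tuple of $\delta$ gates to serve as the ``first'' gate of each layer (giving $\gamma(\gamma-1)\cdots(\gamma-\delta+1)=\gamma!/(\gamma-\delta)!$ options) and then distributing the remaining $\gamma-\delta$ gates freely among the $\delta$ layers ($\delta^{\gamma-\delta}$ options); this overcounts but is a valid upper bound. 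For (ii), the number of ways to pick ordered disjoint pairs of qudits for $\gamma_i$ gates in a layer is $n!/(n-2\gamma_i)!\,\cdot\,2^{-\gamma_i}\leq n!$, so multiplying over the $\delta$ layers contributes a factor $(n!)^\delta$. Combining the two gives $|\mathcal{P}_{\delta,\gamma}|\leq\frac{\gamma!\,\delta^{\gamma-\delta}}{(\gamma-\delta)!}(n!)^\delta$, as claimed.

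The main obstacle is essentially bookkeeping: one must verify that the proposed combinatorial decomposition genuinely covers every architecture (so that the union $\bigcup_\mathcal{N}\mathcal{F}_\mathcal{N}$ really equals $\mathcal{F}_{\delta,\gamma}$), and that the paper's indexing convention for labelling gates within a layer is compatible with the counting scheme above. No additional circuit-theoretic argument is required, since all the analytic content, namely turning amplitudes into polynomials in the gate entries, was already supplied by Lemma \ref{LmmPolynomialDescribingFixedQuantumCircuit}; the present lemma is a purely discrete enumeration on top of that.
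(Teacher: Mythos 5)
Your proposal is correct and follows essentially the same route as the paper's proof: reduce to the fixed-architecture Lemma \ref{LmmPolynomialDescribingFixedQuantumCircuit} and bound the number of architectures by counting gate-to-layer assignments (the same $\frac{\gamma!\,\delta^{\gamma-\delta}}{(\gamma-\delta)!}$ argument, one designated gate per layer plus free distribution of the rest) times a crude $n!$ per layer for the qudit-to-gate assignment. Your extra remarks on overcounting and the $n!/(n-2\gamma_i)!\,2^{-\gamma_i}\leq n!$ refinement only tighten details the paper deliberately bounds crudely.
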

\noindent \begin{proof}
There are at most $\frac{\gamma!~\delta^{\gamma-\delta}}{(\gamma-\delta)!}$ ways to assign the gates among the $\delta$ layers. The term $\frac{\gamma!}{(\gamma-\delta)!}$ counts assigning a single gate to each layer, to ensure that there are no trivial (empty) layers. Having assigned each layer one gate, the remaining $\gamma-\delta$ gates may be distributed to any of the $\delta$ layers.\\

Next, we bound the number of ways of assigning qudits to the circuit layers, so that the qudits are inputs to the fixed-position unitaries. For our purposes, it suffices to crudely upper bound this by $n!$ for each single layer and thus by $(n!)^\delta$ overall. Hence, there are at most
\begin{align*}
    \frac{\gamma! ~\delta^{\gamma-\delta}}{(\gamma-\delta)!}(n!)^\delta
\end{align*}
different quantum circuit architectures.
The proof is completed by applying Lemma \ref{LmmPolynomialDescribingFixedQuantumCircuit} to every such quantum circuit architecture.
\qed \end{proof}\par
Now that we have established Lemma \ref{LmmPolynomialsDescribingVariableQuantumCircuit}, we can prove Theorem \ref{ThmPseudoDimVariableQuantumCircuit} by reasoning analogous to that in \citep{Goldberg.1995}. See the Appendix for the proof of Theorem \ref{ThmPseudoDimVariableQuantumCircuit}.\\

\subsection{Extension to Circuits with Variable Inputs}\label{SbSctVariableInputs}
We now modify the results of \ref{pseudo-fixed} and \ref{pseudo-variable} to allow not only for the fixed input $|0\rangle^{\otimes n}$, but also for variable input. This is of use, for instance, in Subsec.~\ref{learnability}, in which we consider the PAC-learnability of quantum circuits (of unitary gates or more general quantum channels). In that context, allowing variable input amounts to learning the entire quantum circuit, rather than just its action on $|0\rangle^{\otimes n}.$ This is necessary in order to meaningfully compare the learning problem in Subsec.~\ref{learnability} to exact circuit tomography.\par
To consider variable input states, we define the following function classes, analogously to those in Subsecs.~\ref{pseudo-fixed} and \ref{pseudo-variable}: 
\begin{align*}
\mathcal{F}'_\mathcal{N}:= \lbrace & f:X\times Y\to [0,1]\ |\ \exists U_{\mathcal{N}|\lbrace U^{(i,j)}\rbrace},~U^{(i,j)}\in\mathcal{U}\left( (\IC^d)^{\otimes 2}\right): f(x,y)=|\langle x|U_{\mathcal{N}}|y\rangle|^2\rbrace,
\end{align*}

\noindent where $Y$ can be taken as the computational basis states $\{0,1,...,d-1\}^n,$ or more generally as $Y=X=S_d \times ... \times S_d.$

\begin{lemma}\label{LmmFixedQuantumCircuitVariableInput}
With the notation and assumptions from above the following holds: There exists a polynomial $p'_\mathcal{N}$ in $2\gamma d^4 + 4dn$ real variables of degree $\leq 2\gamma+4n$ such that  every $f\in\mathcal{F}'_\mathcal{N}$ can be obtained from $p'_\mathcal{N}$ by fixing values for the first $2\gamma d^4$ variables. Moreover, in each term of $p'_\mathcal{N}$ the degree in the first $2\gamma d^4$ real variables is $\leq 2\gamma$, the degree in the $2dn$ real variables corresponding to $x\in X$ is $\leq 2n$, and the degree in the $2dn$ real variables corresponding to $y\in Y$ is $\leq 2n$.
\end{lemma}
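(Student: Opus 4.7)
The plan is to mimic the layer-wise analysis of Lemma \ref{LmmPolynomialDescribingFixedQuantumCircuit}, with the only substantive change being that the input vector $|y\rangle$ now contributes its own bank of free real variables rather than being the fixed computational basis vector $|0\rangle^{\otimes n}$. Concretely, I write the amplitude as $\langle x|U_\mathcal{N}|y\rangle = \langle 0|^{\otimes n}\bigl(V^\dagger |x\rangle\bigr)$ after first re-expressing $|y\rangle$ in terms of its $dn$ complex amplitudes (using that $|y\rangle\in Y$ is a product state, hence factors as $\bigotimes_{i=1}^n \sum_{j=0}^{d-1}\beta^{(i)}_j|j\rangle$), and then track how the matrix-vector product propagates through the $\delta$ layers.

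First, I would repeat the right-to-left layer-wise argument from the proof of Lemma \ref{LmmPolynomialDescribingFixedQuantumCircuit} to conclude that the amplitude $\langle x|U_\mathcal{N}|y\rangle$ is a linear combination of the amplitudes of $|y\rangle$, with coefficients that are multilinear polynomials of total degree $\leq \gamma$ in the $\gamma d^4$ complex entries of the unitaries and of total degree $\leq n$ in the $dn$ complex amplitudes of $|x\rangle$ (since $|x\rangle$ also factors as a product state over $n$ qudits). Each amplitude of $|y\rangle$ is itself a product of $n$ complex entries, one per qudit, contributing a further factor of total degree $\leq n$ in the $dn$ complex $|y\rangle$-parameters. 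Thus $q'_\mathcal{N}:=\langle x|U_\mathcal{N}|y\rangle$ is a polynomial in $\gamma d^4 + dn + dn$ complex variables whose total degree is bounded by $\gamma$ in the unitary entries, by $n$ in the $x$-amplitudes, and by $n$ in the $y$-amplitudes.

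Second, to pass to the probability $|\langle x|U_\mathcal{N}|y\rangle|^2=q'_\mathcal{N}\cdot\overline{q'_\mathcal{N}}$, I separate real and imaginary parts of each complex parameter, which doubles both the variable count and the per-group degree bounds. This yields a real polynomial $p'_\mathcal{N}$ in $2\gamma d^4 + 2dn + 2dn = 2\gamma d^4 + 4dn$ real variables whose degree in the first $2\gamma d^4$ (unitary) variables is $\leq 2\gamma$, in the next $2dn$ ($x$-)variables is $\leq 2n$, and in the last $2dn$ ($y$-)variables is $\leq 2n$. Since in each monomial these per-group degrees add, the total degree is bounded by $2\gamma + 4n$. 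Fixing the first $2\gamma d^4$ variables then corresponds to specifying the unitaries $\{U^{(i,j)}\}$, and any $f\in\mathcal{F}'_\mathcal{N}$ arises in this way.

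I do not anticipate a serious obstacle: the entire content of the lemma is already implicit in the proof of Lemma \ref{LmmPolynomialDescribingFixedQuantumCircuit}, and the only bookkeeping item is to verify that treating $|y\rangle$ as a product state of $n$ single-qudit factors contributes exactly the same kind of degree-$n$ dependence that $|x\rangle$ already contributed, so that the two input sides play symmetric roles. The mildly delicate point is to make sure not to double-count the degree: the unitary-entry degree remains $2\gamma$ (not $2\gamma + \text{anything}$) because layer-by-layer only unitary entries are multiplied against a varying input amplitude, never against each other beyond what already occurred in Lemma \ref{LmmPolynomialDescribingFixedQuantumCircuit}.
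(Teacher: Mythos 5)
Your proposal is correct and follows essentially the same route as the paper: expand the variable input in the computational basis, note that each basis amplitude of the product state $|y\rangle$ is a degree-$n$ monomial in its $dn$ complex parameters, reuse the layer-wise polynomial from Lemma \ref{LmmPolynomialDescribingFixedQuantumCircuit} for the coefficients, and square, which doubles the variable count and the per-group degree bounds. The bookkeeping of degrees ($\leq 2\gamma$ in the unitary variables, $\leq 2n$ in each of the $x$- and $y$-blocks, total $\leq 2\gamma+4n$) matches the paper's argument exactly.
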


\noindent \begin{proof} Consider the product state input $|y\rangle = \sum_{z} y_z|z\rangle.$ As we consider product states, each $y_z$ is a product of $n$ complex parameters. Following the same reasoning as before, for a fixed $z\in\{ 0,\ldots,d-1\},$ $\langle z |U_\mathcal{N}|x\rangle$ is a multilinear polynomial $q_\mathcal{N}^{z}.$ Then, the amplitude $\langle y|U_\mathcal{N}|x\rangle$ is
\begin{align*}
q'_\mathcal{N}(x,y) = \langle y|U_\mathcal{N}|x\rangle &=\sum_{z\in\{0,1,...,d-1\}^n} \overline{y_z}~ 
    \langle z |U_{\mathcal{N}}|x\rangle\\
    &=\sum_{z\in\{0,1,...,d-1\}^n} \overline{y_z}~ q_\mathcal{N}^z (x).
\end{align*}

In the above equation, $q'_\mathcal{N}(x,y)$ has degree at most $n$ in $y,$ and so upon squaring the amplitude $q'_\mathcal{N}(x,y)$ to obtain $p'_\mathcal{N}(x,y)$ as in \ref{LmmPolynomialDescribingFixedQuantumCircuit}, we have a degree at most $2n$ in the $2dn$ real variables corresponding to $y.$ The rest follows from Lemma \ref{LmmPolynomialDescribingFixedQuantumCircuit}.\qed
\end{proof}

The bound from Theorem \ref{ThmPseudoDimFixedQuantumCircuit} still holds for the case of variable circuit input, with the proof proceeding almost identically upon replacing Lemma \ref{LmmPolynomialDescribingFixedQuantumCircuit} by Lemma \ref{LmmFixedQuantumCircuitVariableInput}. The $2d\cdot n$ additional variables that arise from the polynomial $y$- \mbox{dependence} do not alter the bound because we fix the values of these variables in the pseudo-shattering assumption.

\subsection{Extension to Circuits of Quantum Operations}\label{SbSctQuOperations}
We finish this section by describing an extension of Theorems \ref{ThmPseudoDimFixedQuantumCircuit} and \ref{ThmPseudoDimVariableQuantumCircuit} to the case of circuits of quantum operations, instead of only unitaries. This generalization is relatively straightforward because the decisive property of unitaries used in our previous proofs was not the preservation of inner products, but rather linearity. This setting is useful to e.g.~describe circuits with imperfect gates. Rather than consider a logical gate that implements a unitary exactly, each gate can instead be considered a quantum operation that executes the desired unitary with some probability, and e.g.~depolarizes input qudits with some probability. (Other noise models are of course possible.) Note that although quantum operations can, by Stinespring’s dilation theorem, be viewed as subsystem dynamics of a larger, unitarily-evolving system, if we only have access to measurement data for the subsystem then we cannot directly apply our result for the unitary case.\par
We use analogous notation to that introduced at the beginning of Subsec.~\ref{pseudo-fixed}, writing $T_{\mathcal{N}|\{T^{(i,j)}\}}$ for the overall quantum operation implemented by $\mathcal{N}$ when plugging the two-qudit quantum operations $\{T^{(i,j)}\}_{1\leq i\leq\delta, 1\leq j\leq\gamma_i}$ into the respective positions of the quantum circuit.\par
The quantum circuit $\mathcal{N}$ (of operations) now gives rise to the set of output states
\begin{align*}
\mathcal{D}_\mathcal{N}\left((\IC^d)^{\otimes n} \right) := \{&T_{\mathcal{N}|\{T^{(i,j)}\}}(|0^n\rangle\langle 0^n|)\ ~|~T^{(i,j)}\in\mathcal{T}\left((\IC^d)^{\otimes 2} \right) \},
\end{align*}
where we write $|0^n\rangle=|0\rangle^{\otimes n}$, so $|0^n\rangle\langle 0^n|=(|0\rangle\langle 0|)^{\otimes n}$.\\
By taking into account all possible quantum circuits of size $\gamma$ and depth $\delta$, we obtain
\begin{align*}
    \mathcal{D}_{\delta,\gamma}\left((\IC^d)^{\otimes n} \right)
    := \{\rho\ |\ &\exists~ \textrm{circuit}~\mathcal{N} \textrm{ of } \text{two-qudit operations}
    \text{of size }\gamma\\ &\textrm{ and depth }\delta\textrm{ such that }\rho\in\mathcal{D}_\mathcal{N}\left((\IC^d)^{\otimes n} \right) \}.
\end{align*}
These states now yield again a $p$-concept class
$$\mathcal{G}_{\delta,\gamma}:=  \{ f:X\to [0,1]\ |\ \exists\rho\in\mathcal{D}_{\delta,\gamma}\left((\IC^d)^{\otimes n} \right): f(x)=\langle x|\rho | x\rangle \}.$$
In this scenario, we show:
\begin{theorem}\label{ThmPseudoDimVariableQuantumCircuitOperations}
With the notation and assumptions from above, it holds that
$$
\Pseudo(\mathcal{G}_{\delta,\gamma})\leq \mathcal{O} (\delta \cdot d^8 \cdot \gamma^2 \log \gamma).
$$
\end{theorem}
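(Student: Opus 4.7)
The plan is to mirror the proof of Theorem \ref{ThmPseudoDimVariableQuantumCircuit} essentially verbatim, with one substitution and one simplification. The substitution is that the $d^4$ complex entries parameterizing a $2$-qudit unitary are replaced by the $d^8$ complex entries parameterizing a $2$-qudit quantum operation, viewed as a linear map on $\mathcal{B}((\IC^d)^{\otimes 2})$ and encoded, for instance, by its superoperator matrix or its Choi matrix. The simplification is that, since $\langle x|T(\rho)|x\rangle$ is linear in $T$ rather than arising from a squared amplitude, the degree contributed per gate drops from $2$ to $1$; this only helps and gets absorbed into the $\mathcal{O}$-notation. Complete positivity and trace-non-increase need not be imposed polynomially: they merely carve out the feasible region within the parameter space and do not alter the polynomial form of $\langle x|T_{\mathcal{N}}(|0^n\rangle\langle 0^n|)|x\rangle$.

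First I would establish an operations analog of Lemma \ref{LmmPolynomialDescribingFixedQuantumCircuit}. For a fixed architecture $\mathcal{N}$ of size $\gamma$ and depth $\delta$, the output density matrix $T_{\mathcal{N}|\{T^{(i,j)}\}}(|0^n\rangle\langle 0^n|)$ is multilinear in the $\gamma$ operation matrices, so every one of its entries is a polynomial of degree $\leq \gamma$ in the $\gamma d^8$ complex, equivalently $2\gamma d^8$ real, operation parameters. Taking the diagonal element $\langle x|\cdot|x\rangle$ contributes degree $\leq 2n$ in the $2dn$ real parameters of the product state $|x\rangle$, by the same calculation used in Lemma \ref{LmmPolynomialDescribingFixedQuantumCircuit}. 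Hence every $f$ in the fixed-architecture version of $\mathcal{G}_{\delta,\gamma}$ arises from a single polynomial $p^{\mathrm{op}}_\mathcal{N}$ in $2\gamma d^8+2dn$ real variables, of degree $\leq \gamma$ in the first $2\gamma d^8$ and degree $\leq 2n$ in the last $2dn$, by fixing the first block of variables.

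Next I would establish the variable-architecture analog of Lemma \ref{LmmPolynomialsDescribingVariableQuantumCircuit}. The architecture count is insensitive to whether the gates are unitaries or operations, so one still obtains a family $\mathcal{P}^{\mathrm{op}}_{\delta,\gamma}$ of at most $\frac{\gamma!\,\delta^{\gamma-\delta}}{(\gamma-\delta)!}(n!)^\delta$ such polynomials, one per architecture, with the per-polynomial structure above, such that every $f\in\mathcal{G}_{\delta,\gamma}$ arises by fixing the first $2\gamma d^8$ variables of some member of $\mathcal{P}^{\mathrm{op}}_{\delta,\gamma}$.

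Finally, assume $m$ points are pseudo-shattered by $\mathcal{G}_{\delta,\gamma}$. Applying Corollary \ref{CrlWarren} per architecture and union-bounding over architectures yields
\begin{align*}
2^m \leq \frac{\gamma!\,\delta^{\gamma-\delta}}{(\gamma-\delta)!}(n!)^\delta \cdot \left(\frac{4em}{d^8}\right)^{2\gamma d^8}.
\end{align*}
Taking logarithms and using $n\leq 2\gamma$ to absorb $\delta\log(n!)$ into $\mathcal{O}(\delta\gamma\log\gamma)$, together with the same case split on $m$ used to close the proof of Theorem \ref{ThmPseudoDimVariableQuantumCircuit}, gives $m\leq\mathcal{O}(\delta\cdot d^8\cdot\gamma^2\log\gamma)$, as desired. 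The main obstacle is purely conceptual: identifying the right linear-in-operation parameterization with the correct $d^8$ count; once this is in place, the rest of the argument is routine bookkeeping through the Warren/Goldberg machinery already developed for the unitary case.
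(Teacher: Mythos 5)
Your proposal is correct and follows essentially the same route as the paper's own proof: interpret each $2$-qudit operation as a $d^4\times d^4$ matrix (so $d^8$ complex parameters per gate), observe that the output probability is multilinear of bounded degree in these parameters, reuse the architecture count from Lemma \ref{LmmPolynomialsDescribingVariableQuantumCircuit}, and close via Corollary \ref{CrlWarren}. Your minor deviations --- working forward on $|0^n\rangle\langle 0^n|$ rather than via the adjoint, noting the degree drops to $\gamma$ since no amplitude is squared, and placing the architecture count outside the Warren bound as a union bound rather than inside the polynomial count --- only tighten constants and do not change the argument.
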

\noindent \begin{proof}
We only sketch the reasoning, as it is similar to that in the proof of Theorem \ref{ThmPseudoDimVariableQuantumCircuit}. We first need to establish an analogue of Lemma \ref{LmmPolynomialsDescribingVariableQuantumCircuit}. To this end, observe that a quantum operation acting on two-qudit states can be interpreted as a $d^4\times d^4$ matrix with complex entries. Moreover, we may write
\begin{align*}
    \langle x| T_\mathcal{N}(|0^n\rangle\langle 0^n|) |x\rangle
    &= \tr[ T_\mathcal{N}(|0^n\rangle\langle 0^n|) |x\rangle\langle x|]\\
    &= \tr[ |0^n\rangle\langle 0^n| T^*_\mathcal{N}(|x\rangle\langle x|)]\\
    &=~ \langle 0^n| T^*_\mathcal{N}(|x\rangle\langle x|) |0^n\rangle,
\end{align*}
where $T^*_\mathcal{N}$ denotes the adjoint operation of $T_\mathcal{N}$ with regard to the Hilbert-Schmidt inner product.\par
As before, we can do a layer-wise analysis of the transformation of $|x\rangle\langle x|$ and observe that the entries of the (sub-normalized) density matrix after a layer can be written as linear combinations of the entries of the (sub-normalized) density matrix before the layer. Moreover, the coefficients can be written as multilinear polynomials with the degree determined by the number of two-qudit operations in the layer. Hence, we obtain the result of Lemma \ref{LmmPolynomialDescribingFixedQuantumCircuit} with $d^8$ instead of $d^4$. The bound on the number of different quantum circuit architectures can be derived in exactly the same way as before, so the analogue of Lemma \ref{LmmPolynomialsDescribingVariableQuantumCircuit} holds, completing the proof of the theorem.
\qed \end{proof}

Theorem \ref{ThmPseudoDimVariableQuantumCircuitOperations} and its proof sketch also help to elucidate the relevance of the unitarity assumption in Theorems \ref{ThmPseudoDimFixedQuantumCircuit} and \ref{ThmPseudoDimVariableQuantumCircuit}. 
Unitarity justifies our restriction to pure states, but in other respects Theorems \ref{ThmPseudoDimFixedQuantumCircuit} and \ref{ThmPseudoDimVariableQuantumCircuit} do not exploit unitary. The difference between Theorems \ref{ThmPseudoDimVariableQuantumCircuit} and \ref{ThmPseudoDimVariableQuantumCircuitOperations} amounts to the size of the matrices that represent the unitaries or quantum operations.
\section{Applications}\label{sec:applications}
In this section, we explore two different applications of our pseudo-dimension upper bounds. First, we employ the pseudo-dimension 
to exhibit a large but finite discrete set of quantum states, out of which at least one is hard to implement in the sense that preparing it requires exponentially many $2$-qubit unitaries. Second, we combine the pseudo-dimension bound with results from the theory of $p$-concept learning to derive the PAC-learnability of quantum circuits.\par
\subsection{Lower Bounds on the Gate Complexity of Quantum State Preparation}\label{sec:lower}
It is well known that almost all $n$-qubit unitaries require an exponential (in $n$) number of $2$-qubit unitaries to be implemented. Similarly, almost all pure $n$-qubit states require an application of exponentially (in $n$) many $2$-qubit unitaries to be generated from the $|0\rangle^{\otimes n}$ state \citep[see e.g.][]{Nielsen.2010}. However, in neither case are there explicit examples of unitaries or states saturating this exponentiality bound. (See \citep{Aaronson.2016} for more information on the gate complexity of unitary implementation and state preparation.) We will use the pseudo-dimension as a tool to exhibit a discrete set of pure qubit states such that at least one of them requires exponentially many $2$-qubit unitaries to be generated from $|0\rangle^{\otimes n}$.\par
The drawback of our result is that the size of this set is $2^{2^n}$ and thus unsatisfyingly large. By relatively simple deliberations this size can be reduced by an order of $2^n$ elements, though this is negligible compared to the overall size.\par
We now describe the construction of the candidate set of states. For a subset $C\subseteq\lbrace |x0\rangle\rbrace_{x\in\lbrace 0,1\rbrace^n}$, namely a subset of the set of all computational basis states of $n+1$ qubits that end on $0$, with $C\neq\emptyset$, define
\begin{align*}
|\psi_C\rangle = \frac{1}{\sqrt{|C|}}\sum\limits_{x0\in C} |x0\rangle.
\end{align*}
For $C=\emptyset$ we take
\begin{align*}
|\psi_\emptyset\rangle = |0\rangle^{\otimes n}\otimes |1\rangle.
\end{align*}
(Note that the $(n+1)^{st}$ qubit only really matters for $|\psi_\emptyset\rangle$.) Our set of interest will be $$\mathcal{S}:=\lbrace |\psi_C\rangle \ |\ C\subseteq\lbrace |x0\rangle\rbrace_{x\in\lbrace 0,1\rbrace^n}\rbrace.$$

This discrete set of $2^{2^n}$ multi-qubit quantum states now gives rise to a class of $p$-concepts
\begin{align*}
    \mathcal{F}_\mathcal{S} = \lbrace f_C:X\to [0,1]\ |\ \exists C\subseteq\lbrace |x0\rangle\rbrace_{x\in\lbrace 0,1\rbrace^n}:~ f_C(x) = |\langle x|\psi_C\rangle|^2\rbrace.
\end{align*}

\noindent This class has large pseudo-dimension, as described in the following lemma.

\begin{lemma}\label{LmmPseudoDimLowerBoundCandidateClassV2}
With the notation introduced above, it holds that
$
\Pseudo(\mathcal{F}_\mathcal{S}) \geq 2^n.
$
\end{lemma}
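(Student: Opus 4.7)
The plan is to exhibit an explicit pseudo-shattered set of size $2^n$, using the computational basis states themselves as the shattering points.

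First, I would enumerate $\{0,1\}^n = \{s_1, \ldots, s_{2^n}\}$ and set the candidate shattered set to be $\{x_1, \ldots, x_{2^n}\}$ where $x_i := |s_i 0\rangle$. These are product states, hence valid elements of $X = S_d \times \cdots \times S_d$ (with $d=2$, and $n+1$ factors). The key computation is the pointwise evaluation of $f_C$ on these inputs: by orthonormality of the computational basis, for any nonempty $C \subseteq \{|x0\rangle\}_{x\in\{0,1\}^n}$,
\[
f_C(x_i) = |\langle s_i 0| \psi_C\rangle|^2 = \begin{cases} 1/|C| & \text{if } |s_i 0\rangle \in C,\\ 0 & \text{otherwise,}\end{cases}
\]
while $f_\emptyset(x_i) = |\langle s_i 0 | 0^n 1\rangle|^2 = 0$ for every $i$.

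Next, I would choose the thresholds uniformly as $y_i := 1/2^{n+1}$ for all $i$. The point is that the nonzero value $1/|C|$ is always at least $1/2^n > y_i$ (since $|C| \leq 2^n$), whereas the value $0$ is always strictly less than $y_i$. Hence $f_C(x_i) \geq y_i$ if and only if $|s_i 0\rangle \in C$.

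To verify pseudo-shattering, given any subset $C' \subseteq \{1,\ldots,2^n\}$, I would pick the witnessing function as follows: if $C' = \emptyset$, take $f_\emptyset$, for which $f_\emptyset(x_i) = 0 < y_i$ for all $i$, so no index is selected, as desired. Otherwise, define $C := \{|s_i 0\rangle : i \in C'\} \subseteq \{|x0\rangle\}_{x\in\{0,1\}^n}$ and take $f_C \in \mathcal{F}_\mathcal{S}$; by the computation above, $f_C(x_i) \geq y_i$ precisely when $i \in C'$. This establishes pseudo-shattering of a set of size $2^n$ and hence $\Pseudo(\mathcal{F}_\mathcal{S}) \geq 2^n$.

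There is no genuine obstacle here; the only subtle point is making a single choice of threshold $y_i$ that works uniformly across all $C$ despite $|C|$ varying from $1$ to $2^n$. This is handled by observing that the gap between the minimal nonzero value $1/2^n$ and $0$ is bridged by any threshold in $(0, 1/2^n]$, with $1/2^{n+1}$ being one convenient choice.
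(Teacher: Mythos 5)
Your proof is correct and follows essentially the same route as the paper: shatter the $2^n$ computational basis states $|x0\rangle$ with a uniform threshold, using the fact that $f_C(x0)$ equals $1/|C|$ or $0$ according to membership in $C$ (the paper takes the threshold $1/2^n$ rather than your $1/2^{n+1}$, which makes no difference). No gaps.
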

\noindent \begin{proof}
Consider the subset of computational basis states\\ $\lbrace |x0\rangle\rbrace_{x\in\lbrace 0,1\rbrace^n}$ and the corresponding threshold values $y_{x0}=\frac{1}{2^n}=\min\limits_{C\subseteq\lbrace |x0\rangle\rbrace_{x\in\lbrace 0,1\rbrace^n}}\frac{1}{\lvert C \rvert} $ independently of $x0$. By construction of $\mathcal{S}$ and thus $\mathcal{F}_\mathcal{S}$ the following holds: \\For any $C\subseteq\lbrace |x0\rangle\rbrace_{x\in\lbrace 0,1\rbrace^n}$
\begin{align*}
    f_C(x0) = |\langle x0|\psi_C\rangle|^2 = 
    \begin{cases}
    \frac{1}{\lvert C \rvert}\quad &\textrm{if }|x0\rangle\in C\\
    0 &\textrm{else}
    \end{cases}.
\end{align*}
In particular, we have
\begin{align*}
f_C(x0) \geq y_{x0}\ \Longleftrightarrow\ |x0\rangle\in C.
\end{align*}
Hence, $\Pseudo(\mathcal{F}_\mathcal{S})\geq 2^n$, because we have found an example of a set of size $2^n$ that is pseudo-shattered.
\qed \end{proof}

We now combine this simple observation with Theorem \ref{ThmPseudoDimVariableQuantumCircuit}, which gives us the following:

\begin{theorem}
With the notation introduced above, if $\gamma$ and $\delta$ are such that each state in $\mathcal{S}$ can be generated from the state $|0\rangle^{\otimes (n+1)}$ by some circuit of size $\gamma$ and depth $\delta$, then 
\begin{align*}
2^n \leq \mathcal{O}\left(\delta \cdot 2^4 \cdot \gamma^2 \log \gamma \right)
\end{align*}
\end{theorem}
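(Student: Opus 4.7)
The plan is to chain together the lower bound on $\Pseudo(\mathcal{F}_\mathcal{S})$ from Lemma \ref{LmmPseudoDimLowerBoundCandidateClassV2} with the upper bound on $\Pseudo(\mathcal{F}_{\delta,\gamma})$ from Theorem \ref{ThmPseudoDimVariableQuantumCircuit}, using the natural inclusion between the two function classes under the hypothesis of the theorem.

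More concretely, I would proceed as follows. First, set $d=2$ (qubits) and work with $n+1$ qubits, so that the function class $\mathcal{F}_{\delta,\gamma}$ from Subsection \ref{pseudo-variable} consists of all maps $x\mapsto |\langle x|\psi\rangle|^2$ where $|\psi\rangle$ ranges over output states of $2$-local qubit circuits of depth $\delta$ and size $\gamma$ starting from $|0\rangle^{\otimes (n+1)}$. The hypothesis of the theorem says exactly that each $|\psi_C\rangle\in\mathcal{S}$ lies in $\mathcal{S}_{\delta,\gamma}((\IC^2)^{\otimes (n+1)})$, hence $\mathcal{F}_\mathcal{S}\subseteq \mathcal{F}_{\delta,\gamma}$. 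Since pseudo-dimension is monotone under set inclusion of function classes, this yields $\Pseudo(\mathcal{F}_\mathcal{S})\leq \Pseudo(\mathcal{F}_{\delta,\gamma})$.

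Combining this monotonicity step with Lemma \ref{LmmPseudoDimLowerBoundCandidateClassV2} on the one hand and Theorem \ref{ThmPseudoDimVariableQuantumCircuit} (applied with $d=2$) on the other gives the chain
\begin{equation*}
2^n \;\leq\; \Pseudo(\mathcal{F}_\mathcal{S}) \;\leq\; \Pseudo(\mathcal{F}_{\delta,\gamma}) \;\leq\; \mathcal{O}\!\left(\delta\cdot 2^4 \cdot \gamma^2 \log\gamma\right),
\end{equation*}
which is exactly the claimed bound.

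There is no genuine obstacle here; the only small subtlety worth being careful about is the domain mismatch between $\mathcal{F}_\mathcal{S}$ (defined on some set $X$ of measurement inputs) and $\mathcal{F}_{\delta,\gamma}$ (defined on the product of unit spheres $S_d\times\cdots\times S_d$). One just has to note that in the pseudo-shattering witness used in Lemma \ref{LmmPseudoDimLowerBoundCandidateClassV2} the inputs are the product computational basis states $\{|x0\rangle\}_{x\in\{0,1\}^n}$, which are valid elements of the domain $X$ of $\mathcal{F}_{\delta,\gamma}$, so the same $2^n$-element shattered set witnesses pseudo-shattering for $\mathcal{F}_{\delta,\gamma}$ as well. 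As a corollary, since $2^n$ cannot be $\mathcal{O}(\delta\gamma^2 \log\gamma)$ unless $\delta$ or $\gamma$ is at least exponential in $n$, at least one state in $\mathcal{S}$ necessarily requires a circuit of exponential depth or size to be prepared from $|0\rangle^{\otimes(n+1)}$.
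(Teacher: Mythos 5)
Your proposal is correct and matches the paper's own argument: the hypothesis gives the inclusion $\mathcal{F}_\mathcal{S}\subseteq\mathcal{F}_{\delta,\gamma}$, and chaining the lower bound of Lemma \ref{LmmPseudoDimLowerBoundCandidateClassV2} with the upper bound of Theorem \ref{ThmPseudoDimVariableQuantumCircuit} (with $d=2$, on $n+1$ qubits) yields the claim. Your remark that the shattered set consists of product computational basis states, hence valid inputs for $\mathcal{F}_{\delta,\gamma}$, is a careful addition but does not change the route.
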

\noindent \begin{proof}
Under the assumption of the Theorem we can conclude $\mathcal{F}_\mathcal{S}\subseteq\mathcal{F}_{\delta,\gamma}$. Now combine the lower bound of Lemma \ref{LmmPseudoDimLowerBoundCandidateClassV2} with the upper bound from Theorem \ref{ThmPseudoDimVariableQuantumCircuit}.
\qed \end{proof}

\begin{corollary}\label{CrlCandidateStatesForHighComplexity}
There exists a $C\subseteq \lbrace |x0\rangle\rbrace_{x\in\lbrace 0,1\rbrace^n}$ such that $|\psi_C\rangle= \frac{1}{\sqrt{|C|}}\sum\limits_{|x0\rangle\in C}|x0\rangle$ cannot be implemented by a quantum circuit of $2$-qubit unitaries with subexponential (in $n$) size or depth.
\end{corollary}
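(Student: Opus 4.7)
The plan is to prove the corollary by contraposition against the preceding theorem. The preceding theorem gives a quantitative upper bound $2^n \leq \mathcal{O}(\delta \cdot 2^4 \cdot \gamma^2 \log \gamma)$ conditional on \emph{every} state in $\mathcal{S}$ being implementable by circuits of a common size $\gamma$ and depth $\delta$. So the contrapositive says: whenever $\delta \cdot \gamma^2 \log \gamma$ is too small to match $2^n$, at least one state in $\mathcal{S}$ escapes such circuits. Our job is to turn ``too small'' into ``subexponential in $n$.''

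First, I would negate the corollary: suppose that for every $n$ and every $C \subseteq \{|x0\rangle\}_{x\in\{0,1\}^n}$, the state $|\psi_C\rangle$ can be generated from $|0\rangle^{\otimes(n+1)}$ by some circuit of $2$-qubit unitaries of subexponential size and depth in $n$. For each fixed $n$ the set $\mathcal{S}$ is finite (of cardinality $2^{2^n}$), so I can take the maxima $\gamma(n) := \max_C \gamma_C$ and $\delta(n) := \max_C \delta_C$ of the minimum achievable sizes and depths. These are finite and give a \emph{uniform} pair $(\gamma(n),\delta(n))$ such that every state in $\mathcal{S}$ is implementable by a circuit of size at most $\gamma(n)$ and depth at most $\delta(n)$.

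Next, I would feed $(\gamma(n),\delta(n))$ into the preceding theorem to get
\begin{equation*}
2^n \leq \mathcal{O}\bigl(\delta(n) \cdot 2^4 \cdot \gamma(n)^2 \log \gamma(n)\bigr).
\end{equation*}
Since depth at most $\delta$ on $n$ qubits with $2$-local gates forces size at most $\delta \cdot n/2$, subexponential depth implies subexponential size (for polynomial $n$), so ``subexponential size or depth'' can be treated as ``subexponential size'' without loss of generality. If both $\gamma(n)$ and $\delta(n)$ were $2^{o(n)}$, the right-hand side would be $2^{o(n)}$, contradicting the $2^n$ lower bound for all sufficiently large $n$. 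Consequently, for each large enough $n$ there must exist at least one $C$ whose corresponding $|\psi_C\rangle$ admits no such circuit, which is the assertion of the corollary.

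I do not anticipate a serious obstacle: the argument is essentially a one-line contrapositive of the preceding theorem, with the only mild subtlety being the reduction of pointwise subexponential circuits per $C$ to a uniform subexponential bound $(\gamma(n),\delta(n))$. This is handled cleanly because $|\mathcal{S}| = 2^{2^n}$ is finite for each $n$, so the max exists; whether we phrase the negation pointwise or uniformly is immaterial for reaching the contradiction.
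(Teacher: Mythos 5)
Your proposal is correct and matches the paper's (implicit) reasoning: the corollary is stated as an immediate consequence of the preceding theorem, i.e.\ precisely the contrapositive you spell out. Your extra care about passing from per-$C$ subexponential circuits to a uniform $(\gamma(n),\delta(n))$ via the maximum over the finite set $\mathcal{S}$, and the observation that $\delta\leq\gamma\leq\delta\cdot(n+1)/2$ collapses the ``size or depth'' disjunction, are fine and only make explicit what the paper leaves unsaid.
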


Note that any set of functions which pseudo-shatters a set of size $2^n$ has to have at least $2^{2^n}$ elements. Hence, the large size of the set $C$ is an automatic consequence of our line of reasoning.\par
\begin{remark}
We note that a set of $n$-qubit states with cardinality doubly exponential in $n$ s.t.~at least one of them needs an exponential number of gates (up to logarithmic factors) to be implemented can also be obtained with more standard reasoning. Namely, it is well known that there are $n$-qubit states the approximation of which up to trace-distance $\varepsilon$ requires $\Omega\left( \frac{2^n\log\left(\tfrac{1}{\varepsilon}\right)}{\log(n)}\right)$ unitary gates \citep[see][chap.~4.5.4]{Nielsen.2010}. So if we pick a $\frac{1}{2}$-net of size $\mathcal{O}\left( 2^{2^n} \right)$ for the set of pure $n$-qubit quantum states, this will have the desired properties.
\end{remark}
\indent\indent We sketch another way of using our pseudo-dimension bound to study the gate complexity of state preparation and which might lead to a smaller set of candidates. Given $n$-qubit pure states $|\psi_1\rangle,\ldots,|\psi_m\rangle$ and efficiently implementable (i.e. with polynomially many $2$-qubit unitary gates arranged in polynomially many layers) unitaries $U_1,\ldots,U_k$, one can study the set of states
$\{ U_i|\psi_j\rangle\}_{1\leq i\leq k, 1\leq j\leq m}$.\\ If an exponential (in $n$) pseudo-dimension lower bound can be established for $$\{f:X\to [0,1] ~|~ \exists 1\leq i\leq k, 1\leq j\leq m: f (x)=|\langle x|U_i|\psi_j\rangle|^2\},$$ then, since every $U_i$ is efficiently implementable, one can conclude that at least one among the states $|\psi_j\rangle$ is not efficiently implementable.\par
The advantage of such a pseudo-dimension-based reasoning would be that $m$ need not be doubly exponential in $n$, since we can compensate for this in $k$. This realization can already be used to reduce the size of the set of candidate states given in Corollary \ref{CrlCandidateStatesForHighComplexity}. However, we have not yet been able to identify sufficiently many efficiently-implementable unitaries to reduce the size below doubly exponential. Nevertheless, there is likely room for improvement in applying our method to the gate complexity of quantum state preparation.
\subsection{Learnability of Quantum Circuits}\label{learnability}
We now use our pseudo-dimension bounds to study learnability. Specifically, we use the pseudo-dimension bound for the case of variable inputs (Subsec.~\ref{SbSctVariableInputs}) combined with the generalization to quantum operations (Subsec.~\ref{SbSctQuOperations}). We proceed quite similarly to \citep{Aaronson.2007}.\\

The learning problem which we want to study is the following:
Let $\mu$ be a probability measure on $(X\times Y)\times [0,1]$, unknown to the learner. Let $S=\lbrace ((x^{(i)},y^{(i)}),p^{(i)})\rbrace_{i=1}^m$ be corresponding training data drawn i.i.d.~according to $\mu$. A learner must, upon input of training data $S$, size $\Gamma\in\IN$, depth $\Delta\in\IN$, confidence $\delta\in [0,1)$, accuracy, $\varepsilon\in [0,1)$ and error margin $\beta\in (0,1)$, output a hypothesis quantum circuit $\mathcal{N}$ of size $\Gamma$ and depth $\Delta$ consisting of two-qudit operations such that, with probability $\geq 1-\delta$ with regard to the choice of training data,  
\begin{align*}
    \mathbb{P}_{((x,y),p)\sim\mu}\left[ |f_\mathcal{N}(x,y) - p|>\beta \right]
    \leq\varepsilon + \inf\limits_{\mathcal{M}} \mathbb{P}_{(x,p)\sim\mu}\left[ |f_\mathcal{M}(x,y) - p|>\beta \right],
\end{align*}
where the infimum runs over all quantum circuits $\mathcal{M}$ of size $\Gamma$ and depth $\Delta$. Here, $f_\mathcal{N}$ denotes the function $f_\mathcal{N}(x,y)=\langle x| T_\mathcal{N}(|y\rangle\langle y|) |x\rangle$ and $f_\mathcal{M}$ is defined analogously, similarly to Subsec.~\ref{SbSctVariableInputs}.\par
We use our pseudo-dimension bound in order to upper-bound the size of the training data sufficient for solving this task. More precisely, we make use of sample complexity upper bounds from the fat-shattering dimension as proved in \citep{Anthony.2000,Bartlett.1998}, together with the fact that the fat-shattering dimension is upper-bounded by the pseudo-dimension. \par
First we restrict our scope to the ``realizable'' scenario, i.e.~we will assume the probability measure to be of the form 
\begin{align*}
    \mu((x,y),p) = \begin{cases} \mu_1(x,y)\quad &\textrm{if } p = f_{\mathcal{N}_*}(x,y) \\ 0 &\textrm{else} \end{cases}
\end{align*}
for some quantum circuit $\mathcal{N}_*$ of size $\Gamma$ and depth $\Delta$. This will in particular imply that for quantum circuits $\mathcal{M}$ of size $\Gamma$ and depth $\Delta$
\begin{align*}\inf\limits_{\mathcal{M}} \mathbb{P}_{((x,y),p)\sim\mu}\left[ |f_\mathcal{M}(x,y) - p|>\beta \right]=0.
\end{align*}
Colloquially, realizability means that there exists a set of ``correct'' parameters $\Gamma$ and $\Delta$ and these are known to the learner, i.e.~training samples are promised to be drawn from circuits of size $\Gamma$ and depth $\Delta$.\par
We will focus on a proper learning scenario, i.e., we will assume the unknown target circuit to be in some (known) class, namely the class of circuits whose size and depth satisfy certain polynomial bounds, and require the learner to output an element of that same class as hypothesis.\par
We will make use of the following classical result:
\begin{theorem}\label{ThmSampleComplexityViaFatShattering} \emph{\citep[][Corollary $3.3$]{Anthony.2000}}\\
Let $X$ be an input space, let $\mathcal{F}\subseteq [0,1]^X$. Let $D$ be a probability measure on $X$, let $f_*\in\mathcal{F}$. Let $\delta,\varepsilon,\alpha,\beta\in (0,1)$ with $\beta >\alpha$. Let $\mathcal{S}=\lbrace x_1,\ldots,x_m\rbrace$ be a set of $m$ samples drawn i.i.d.~according to $D$. Let $h\in\mathcal{F}$ be such that $|h(x_i)-f_*(x_i)|\leq\alpha$ for all $1\leq i\leq m$.\\
Then, a sample size\\
$$m=\mathcal{O}\left(\frac{1}{\varepsilon}\left( \fat_\mathcal{F}\left( \frac{\beta-\alpha}{8}\right)\log^2\left(\frac{\fat_\mathcal{F}\left(\frac{\beta-\alpha}{8}\right)}{(\beta - \alpha)\varepsilon}\right) + \log\frac{1}{\delta}\right)\right)$$ suffices to guarantee that, with probability $\geq 1-\delta$ with regard to the choice of training data $\mathcal{S}$,
\begin{align*}
    \mathbb{P}_{x\sim D}[ |h(x)-f_*(x)|>\beta]\leq \varepsilon.
\end{align*}
\end{theorem}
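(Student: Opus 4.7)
The plan is to follow the standard empirical-process pipeline for deriving sample-complexity bounds from the fat-shattering dimension: reduce to a uniform deviation statement, symmetrise with a ghost sample and a random swap, discretise via covering numbers, and bound the cover size using the Alon--Ben-David--Cesa-Bianchi--Haussler (ABCH) theorem \citep{Alon.1997}. Writing $g:=h-f_*$ and $\mathcal{G}:=\lbrace h-f_* : h\in\mathcal{F}\rbrace$, the hypothesis $h$ satisfies $\max_i|g(x_i)|\leq\alpha$ on the training sample by assumption, and I want to rule out $\mathbb{P}_{x\sim D}[|g(x)|>\beta]>\varepsilon$. Since translating by $f_*$ preserves fat-shattering at every scale, $\fat_\mathcal{G}=\fat_\mathcal{F}$, so the task reduces to a uniform deviation bound for $\mathcal{G}\subseteq [-1,1]^X$.

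Next I would introduce an independent ghost sample $\mathcal{S}'=\lbrace x_1',\ldots,x_m'\rbrace$ drawn i.i.d.\ from $D$. A standard double-sample argument shows that, provided $\varepsilon m\geq 2$, the probability of the bad event is at most twice the probability that there exists $g\in\mathcal{G}$ with $\max_i|g(x_i)|\leq\alpha$ on $\mathcal{S}$ and $|\lbrace i:|g(x_i')|>\beta\rbrace|\geq \varepsilon m/2$ on $\mathcal{S}'$. Conditioning on the pooled multiset of $2m$ points and applying a uniformly random swap of each pair $(x_i,x_i')$, any fixed such $g$ has conditional probability at most $2^{-\varepsilon m/2}$ of routing all its large-margin indices into the ghost half.

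The core step is then to replace the existential quantifier over $\mathcal{G}$ by a union bound over a finite $L_\infty$-cover of $\mathcal{G}$ restricted to the pooled $2m$ points. I would choose the cover scale $\gamma:=(\beta-\alpha)/2$, so that $L_\infty$-approximation of $g$ by a cover element preserves both $|g(x_i)|\leq\alpha$ and $|g(x_i')|>\beta$ up to the gap $\beta-\alpha$. The ABCH theorem bounds $\mathcal{N}_\infty(\gamma,\mathcal{G},2m)$ by roughly $(m/\gamma)^{\fat_\mathcal{G}(\gamma/4)\log(m/\gamma)}$, so combining with the swap tail yields an overall probability of order $\mathcal{N}_\infty\cdot 2^{-\varepsilon m/2}$. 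Setting this $\leq\delta$ and solving for $m$ gives the stated complexity, with the scale $(\beta-\alpha)/8$ inside $\fat_\mathcal{F}$ arising because ABCH reads fat-shattering at a quarter of the covering scale.

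The main obstacle will be careful bookkeeping of scales and constants: $\gamma$ must be large enough that cover elements genuinely separate the $\alpha$-close from the $\beta$-far events, yet small enough that ABCH's covering bound remains polynomial in $m$. The squared logarithm in the final answer comes from $\log\mathcal{N}_\infty\approx\fat_\mathcal{F}\cdot\log^2(m/\gamma)$, so extracting a closed-form $m$ from the implicit inequality $\varepsilon m\geq 2\log\mathcal{N}_\infty+2\log(1/\delta)$ requires a standard ``$a\leq b\log a\Rightarrow a\leq 2b\log b$'' manipulation.
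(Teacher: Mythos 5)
The paper offers no internal proof of this statement: it is imported directly from Anthony and Bartlett (2000, Corollary 3.3), so the only comparison is with the standard proof in that cited source. Your sketch reproduces exactly that route — reduce to the translated class $\mathcal{G}=\{h-f_*\}$ (which preserves $\fat$), symmetrize with a ghost sample, apply the random-swap/permutation argument, union-bound over an $L_\infty$ cover of the double sample at scale $(\beta-\alpha)/2$ (which still separates the $\leq\alpha$ points from the $>\beta$ points), invoke the Alon et al.\ covering bound at fat-shattering scale $(\beta-\alpha)/8$, and invert $\varepsilon m\gtrsim \fat\log^2 + \log(1/\delta)$ — and, up to routine constant bookkeeping, it is sound and consistent with the stated sample-complexity bound.
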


In our setting, this result implies:
\begin{corollary}
Let $\mathcal{N}_*$ be a quantum circuit of quantum operations with size $\Gamma$ and depth $\Delta$. Let $\mu$ be probability measure on $X\times Y$ unknown to the learner. Let $$S=\lbrace ((x^{(i)},y^{(i)}),f_{\mathcal{N}_*}(x^{(i)},y^{(i)})\rbrace_{i=1}^m$$ be corresponding training data drawn i.i.d.~according to $\mu$. Let $\delta, \varepsilon, \alpha, \beta\in (0,1)$.
Then, training data of size $m=\mathcal{O}\left(\frac{1}{\varepsilon}\left( \Delta d^8\Gamma^2\log(\Gamma)\log^2\left(\frac{\Delta d^8\Gamma^2\log(\Gamma)}{(\beta - \alpha)\varepsilon}\right) + \log\frac{1}{\delta}\right) \right)$ suffice to guarantee that, with probability $\geq 1-\delta$ with regard to choice of the training data, any quantum circuit $\mathcal{N}$ of size $\Gamma$ and depth $\Delta$ that satisfies 
\begin{align*}
    |f_\mathcal{N}(x_i,y_i) - f_{\mathcal{N}_*}(x_i,y_i)|\leq\alpha\quad\forall 1\leq i\leq m
\end{align*}
also satisfies
\begin{align*}
    \mathbb{P}_{(x,y)\sim\mu}[|f_\mathcal{N}(x,y) - f_{\mathcal{N}_*}(x,y)|>\beta]\leq\varepsilon.
\end{align*}
\end{corollary}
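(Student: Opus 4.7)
The plan is to reduce the claim to a direct application of Theorem~\ref{ThmSampleComplexityViaFatShattering} to the natural $p$-concept class associated with depth-$\Delta$, size-$\Gamma$ circuits of quantum operations on variable inputs, and then to control the fat-shattering dimension of that class via the pseudo-dimension bounds already developed.

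First I would identify the correct function class. Combining the variable-input setup of Subsec.~\ref{SbSctVariableInputs} with the circuits-of-operations setup of Subsec.~\ref{SbSctQuOperations}, define $\mathcal{G}'_{\Delta,\Gamma} \subseteq [0,1]^{X\times Y}$ as the class of all maps $(x,y)\mapsto \langle x| T_{\mathcal{N}}(|y\rangle\langle y|)|x\rangle$ where $\mathcal{N}$ ranges over circuits of two-qudit quantum operations with depth $\Delta$ and size $\Gamma$. By realizability, $f_{\mathcal{N}_*}\in\mathcal{G}'_{\Delta,\Gamma}$, and any hypothesis $\mathcal{N}$ considered by the learner yields $f_{\mathcal{N}}\in\mathcal{G}'_{\Delta,\Gamma}$.

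Second, I would bound $\Pseudo(\mathcal{G}'_{\Delta,\Gamma})$. The proof of Theorem~\ref{ThmPseudoDimVariableQuantumCircuitOperations} provides a family of polynomials $\mathcal{P}_{\Delta,\Gamma}$ describing $\mathcal{G}_{\Delta,\Gamma}$ using $d^8$-dimensional matrices per two-qudit operation, with the same bound on $|\mathcal{P}_{\Delta,\Gamma}|$ as in Lemma~\ref{LmmPolynomialsDescribingVariableQuantumCircuit}. Incorporating variable inputs only appends $2dn$ real variables (as in Lemma~\ref{LmmFixedQuantumCircuitVariableInput}), whose values are fixed in the pseudo-shattering argument and therefore do not enter Warren's count. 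Hence the pseudo-dimension bound $\Pseudo(\mathcal{G}'_{\Delta,\Gamma}) \leq \mathcal{O}(\Delta\cdot d^8 \cdot \Gamma^2 \log\Gamma)$ follows from the same cases-analysis used to prove Theorem~\ref{ThmPseudoDimVariableQuantumCircuit}. Since $\fat_{\mathcal{G}'_{\Delta,\Gamma}}(\alpha')\leq \Pseudo(\mathcal{G}'_{\Delta,\Gamma})$ for every $\alpha'>0$, the same bound controls $\fat_{\mathcal{G}'_{\Delta,\Gamma}}\bigl((\beta-\alpha)/8\bigr)$.

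Third, I would invoke Theorem~\ref{ThmSampleComplexityViaFatShattering} with input space $X\times Y$, distribution $\mu$, target $f_*=f_{\mathcal{N}_*}$, hypothesis class $\mathcal{F}=\mathcal{G}'_{\Delta,\Gamma}$, and hypothesis $h=f_{\mathcal{N}}$ for any circuit $\mathcal{N}$ of size $\Gamma$ and depth $\Delta$ satisfying $|f_{\mathcal{N}}(x_i,y_i)-f_{\mathcal{N}_*}(x_i,y_i)|\leq\alpha$ on the training data. Substituting the fat-shattering bound $\fat_{\mathcal{G}'_{\Delta,\Gamma}}((\beta-\alpha)/8) \leq \mathcal{O}(\Delta d^8 \Gamma^2\log\Gamma)$ into the formula of Theorem~\ref{ThmSampleComplexityViaFatShattering} immediately yields the stated sample complexity.

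The only nontrivial step is the second: one must verify that the variable-input extension and the operations-versus-unitaries extension compose cleanly, i.e.~that writing $\langle x| T_{\mathcal{N}}(|y\rangle\langle y|)|x\rangle$ still gives a polynomial whose degree and parameter count match the bounds used in Warren's theorem, with the free unitary parameters replaced by the $d^8$ complex entries of each two-qudit operation. This is a routine adjustment of the layer-wise analysis in Lemmas~\ref{LmmPolynomialDescribingFixedQuantumCircuit} and~\ref{LmmFixedQuantumCircuitVariableInput}, but it is the one technical check that drives the final dependence on $d^8$ rather than $d^4$.
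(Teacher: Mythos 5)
Your proposal is correct and follows essentially the same route as the paper: combine Theorem~\ref{ThmSampleComplexityViaFatShattering} with the variable-input version of the pseudo-dimension bound for circuits of quantum operations (the $d^8$ analogue of Theorem~\ref{ThmPseudoDimVariableQuantumCircuit}, obtained as in Subsecs.~\ref{SbSctVariableInputs} and~\ref{SbSctQuOperations}), and use $\fat_\mathcal{F}(\alpha')\leq\Pseudo(\mathcal{F})$. Your explicit check that the variable-input and operations extensions compose cleanly is exactly the step the paper delegates to "can be proved for operations analogously", so the two arguments coincide.
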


\begin{proof}
Combine Theorem \ref{ThmSampleComplexityViaFatShattering} with Theorem \ref{ThmPseudoDimVariableQuantumCircuit} (more precisely, with its version for variable input states, which can be proved for operations analogously to the reasoning in Subsec.~\ref{SbSctVariableInputs}) and use that the fat-shattering dimension is always upper-bounded by the pseudo-dimension.
\end{proof}

Note that in particular, this implies that for the class of circuits of quantum operations with polynomial size and depth in the number of qudits, a hypothesis that performs well on training data will also perform well in a probably approximately correct sense.

Next, we want to discuss briefly how our result compares to the work \citep{Aaronson.2007} on the learnability of quantum states. There, it is shown that quantum states can be PAC-learned with a sample complexity that depends linearly on the number of qubits and (among other dependencies) polynomially on $\frac{1}{\varepsilon}$, where $\varepsilon$ denotes the desired accuracy. However, this result does not imply learnability of quantum channels with a sample complexity that depends polynomially on the number of qubits. This observation is already stated in \citep{Aaronson.2007}, and we provide an alternate, intuitive explanation for why the result on states does not directly apply to operations.

One can straightforwardly apply the result of \citep{Aaronson.2007} to learn the Choi-Jamiolkowski state of a quantum channel. One can then compute measurement probabilities of output states of a channel $T$ acting on $n$-qubit states, using its Choi-Jamiolkowski state $\tau$. For this we must make use of the formula $$\tr[ E T(\rho)] = 2^n \tr[\tau (E\otimes \rho^T)].$$ Here, we see that any error on the side of the Choi-Jamiolkowski state will be multiplied by a factor exponential in $n$, and thus in this case the overall $n$-dependence of the sample complexity bound from \citep{Aaronson.2007} becomes exponential via the accuracy-dependence.

This motivates our study of learnability of a restricted class of quantum operations. Finding such operations for which process tomography is possible was left as an open problem in \citep{Aaronson.2007}. Our answer to this question is that a PAC-version of quantum process tomography is possible when we restrict our scope to operations that can be implemented by quantum circuits of depth and size polynomial in the number of qudits. However, note that this is subject to a realizability assumption: the learner must known in advance a polynomial bound on the size and depth of the circuit. We show that imposing the operations be efficiently implementable automatically reduces the information-theoretic complexity of learning, requiring only a modest number of training examples. We do not make any statement about the computational complexity of this learning task; this remains an open problem. 

How can this probably approximately correct version of quantum process tomography be put to use? Given polynomially many uses of a black box implementing an unknown quantum operation of polynomial size and depth, one can exhibit a circuit of two-qudit quantum operations that approximates the unknown channel. In other words, we obtain a classical description of an approximate copy of the channel.
\section{Open Problems}\label{sec:discussion}
Finally, in this section we discuss future directions and possible generalizations of our results.\par
Two natural parameters of a circuit, depth and size, appear polynomially in the pseudo-dimension upper bounds. Notably, these bounds are independent of the number of qudits in the circuit. Are our upper bounds tight in their dependence on size and depth? Can similar techniques produce pseudo-dimension lower bounds? For example, by considering a single $2$-qudit unitary it is relatively straightforward to see that the pseudo-dimension of a circuit is $\geq\Omega (d)$. Can we close the gap in dimension-dependence between this linear lower bound and our quartic upper bound?\par
Our application of pseudo-dimension for lower bounds on the gate complexity of state preparation complements known methods \citep[described e.g.~in][]{Nielsen.2010}, based on counting dimensions or covering arguments. We exhibit a class of states of size $2^{2^n},$ for which at least one has exponential gate complexity of state preparation. Can we exploit this new technique to exhibit a smaller set of states? Perhaps the most exciting application of pseudo-dimension bounds could be provable lower bounds on the gate complexity of state preparation, if the reasoning in $\ref{sec:lower}$ is sharpened or the tools are developed further.\par
If circuit depth and size are known in advance, one can information-efficiently learn the circuit. If the learner receives training data generated by an approximation of the circuit, does the result still hold? Can the realizability assumption be relaxed?\par
Does ``pretty-good circuit tomography’’ have applications? On the theory side, this might involve exploiting the learning process as an approximate copy-machine for quantum circuits. 
Of interest for both theory and experiment is whether circuits can be learned with a reasonable amount of computation. One can imagine progress on this question for process tomography similar to that for state tomography; demonstrating a class of states for which learning is computationally efficient in \citep{Rocchetto.2017} made it possible to learn physically interesting states in a laboratory in \citep{Rocchetto.2019}. An efficiency improvement in the process tomography case might also have experimental ramifications.


\section{Acknowledgements}
M.C.C.~and I.D.~thank Michael Wolf for suggesting this problem and both Michael Wolf and Yifan Jia for insightful discussions. Also, M.C.C.~and I.D.~thank Scott Aaronson, Srinivasan Arunachalam and Andrea Rocchetto for their valuable feedback on an earlier version of this paper. Finally, M.C.C.~and I.D.~thank the reviewers for their helpful suggestions.

M.C.C.~gratefully acknowledges support from the TopMath Graduate Center of the TUM Graduate School at the Technische Universit{\"a}t M{\"u}nchen, Germany, and from the TopMath Program at the Elite Network of Bavaria. M.C.C.~is supported by a doctoral scholarship of the German Academic Scholarship Foundation (Studienstiftung des deutschen Volkes).

I.D.~gratefully acknowledges that this material is based upon work supported by the National Science Foundation (NSF) Graduate Research Fellowship under Grant No.~DGE 1656518, and by the German Academic Exchange Service (DAAD) under Grant No.~57381410. Any conclusions expressed in this material are those of the authors and do not necessarily reflect the views of the aforementioned institutions.

\newpage
\bibliographystyle{spbasic}
\bibliography{Literature.bib}

\newpage
\appendix
\section{Proof of Theorem \ref{ThmPseudoDimVariableQuantumCircuit}}
Here, we prove Theorem \ref{ThmPseudoDimVariableQuantumCircuit}, namely that 
$\Pseudo(\mathcal{F}_{\delta,\gamma})\leq \mathcal{O} (\delta \cdot d^4 \cdot \gamma^2 \log \gamma).$\\

\noindent \begin{proof}~(Theorem \ref{ThmPseudoDimVariableQuantumCircuit})\\
We rely upon Lemma \ref{LmmPolynomialsDescribingVariableQuantumCircuit}. 
Let $\lbrace (x_i,y_i)\rbrace_{i=1}^m\subseteq X\times\IR$ be such that for every $C\subseteq\lbrace 1,\ldots,m\rbrace$, there exists $f_C\in\mathcal{F}_{\delta,\gamma}$ such that 
$f_C(x_i) - y_i\geq 0$ if and only if $i\in C.$

By Lemma \ref{LmmPolynomialsDescribingVariableQuantumCircuit}, there exists a set of polynomials $\mathcal{P}_{\delta,\gamma}$ in $2\gamma d^4 + 2d^n$ real variables such that $|\mathcal{P}_{\delta,\gamma}|\leq \frac{\gamma! ~ \delta^{\gamma-\delta}}{(\gamma-\delta)!}(n!)^\delta$ and such that for every $C\subseteq\lbrace 1,\ldots,m\rbrace$, there exists a $p_C\in\mathcal{P}_{\delta,\gamma}$ and an assignment $\Xi_C$ to the first $2\gamma d^4$ variables of $p_C$ such that 
$p_C(\Xi_C,x_i) - y_i \geq 0$ if and only if $i\in C.$\\

In particular, this implies (using the ``moreover''-part of Lemma \ref{LmmPolynomialsDescribingVariableQuantumCircuit}) that the set 
$
\mathcal{P} = \lbrace p(\cdot,x_i) - y_i\rbrace_{i=1}^m\ |\ p\in\mathcal{P}_{\delta,\gamma}\rbrace
$
is a set of $m\cdot |\mathcal{P}_{\delta,\gamma}|\leq m\frac{\gamma! ~ \delta^{\gamma-\delta}}{(\gamma-\delta)!} ~ (n!)^\delta$ polynomials of degree $\leq 2\gamma$ in $2\gamma d^4$ real variables that has at least $2^m$ different consistent sign assignments. So by Corollary \ref{CrlWarren}, we have \begin{align*}
2^m \leq \left(\frac{8e\cdot2\gamma\cdot m}{2\gamma d^4} \cdot \frac{\gamma! ~\delta^{\gamma-\delta}}{(\gamma-\delta)!}~ (n!)^\delta \right)^{2\gamma d^4}.
\end{align*}

Taking logarithms yields
\begin{align*}
m\leq 2\gamma d^4\left(\log(16e\cdot\gamma) + \log\left(\frac{m}{2\gamma d^4} \cdot \frac{\gamma! ~ \delta^{\gamma-\delta}}{(\gamma-\delta)!}~ (n!)^\delta\right)\right).
\end{align*}
Repeating the argument in the proof of Theorem \ref{ThmPseudoDimFixedQuantumCircuit}, we distinguish cases and observe that in both cases, $$m\leq 8d^4\cdot\gamma\cdot\log\left(16e\gamma\cdot \frac{\gamma! ~ \delta^{\gamma-\delta}}{(\gamma-\delta)!}~ (n!)^\delta\right).$$

Expanding the logarithm and using Stirling's formula up to two terms, we have
\begin{align*}
&\log\left(16e\gamma\cdot \frac{\gamma! ~ \delta^{\gamma-\delta}}{(\gamma-\delta)!}~ (n!)^\delta \right)\\
&=\frac{1}{\ln 2}\Big(4\ln2+1+\ln\gamma+ \Big[n\cdot\delta \ln n - n\cdot \delta + \mathcal{O}(\ln n) +
\gamma\ln\gamma -\cancel{\gamma} + \mathcal{O}(\ln \gamma) - (\gamma-\delta)\ln(\gamma-\delta)\\
&\hphantom{\frac{1}{\ln 2}\Big(4\ln2+1+\ln\gamma+\Big[~}\ +(\cancel{\gamma}-\delta) + \mathcal{O}(\ln (\gamma-\delta))+ (\gamma-\delta)\ln\delta\Big]\Big)\\
&\leq \frac{1}{\ln 2}\Big(4\ln2+1 +\ln\gamma+ \left[ 2\gamma\cdot \delta(\ln(2\gamma)-1) + 
\gamma\ln\gamma -(\gamma-\delta)\ln(\gamma-\delta) -\delta + (\gamma-\delta)\ln\delta\right]\Big)\\
&=\mathcal{O}(\gamma\cdot\delta\log\gamma).
\end{align*}
We use the fact that $n\leq 2\gamma$ (because we assume that each qudit is acted upon by at least one gate) in the second step, and note that because $\gamma\geq \delta,$ the asymptotic behavior of all of the above terms are subsumed by the first term in the bracket. We have also confirmed that the $\log(16e\gamma)$ term above may be neglected. Thus, by the definition of the pseudo-dimension we conclude $\Pseudo(\mathcal{F}_\mathcal{N}) 
\leq \mathcal{O} (\delta \cdot d^4 \cdot \gamma^2 \log \gamma)$.
\qed \end{proof}
\end{document}